\documentclass[aps,prl,twocolumn,superscriptaddress,notitlepage,noshowkeys]{revtex4-2}

\usepackage[utf8]{inputenc}
\usepackage[T1]{fontenc}
\usepackage[english]{babel}
\usepackage{csquotes}
\usepackage{tensor}
\usepackage{setspace}
\usepackage{mathtools,amssymb,amsthm} 
\usepackage{array}
\usepackage[new]{old-arrows}
\usepackage{todonotes}
\usepackage{scalerel,stackengine}
\usepackage{wrapfig}
\usepackage[shortlabels]{enumitem}
\usepackage{fancyhdr}
\usepackage{graphicx}
\usepackage{chngcntr}
\usepackage{dsfont}
\usepackage{mathrsfs}
\usepackage{xcolor}
\usepackage{url}
\usepackage[colorlinks=true, allcolors=darkblue]{hyperref}
\usepackage[capitalize]{cleveref}
\usepackage{tikz}
\usepackage{pgfplots}


\stackMath


\definecolor{darkblue}{rgb}{0.1, 0.1, 0.6}

\definecolor{1}{RGB}{31, 119, 180}
\definecolor{2}{RGB}{255, 127, 14}
\definecolor{3}{RGB}{44, 160, 44}
\definecolor{4}{RGB}{148, 103, 187}
\definecolor{5}{RGB}{140, 86, 75}

\usepackage{pgfkeys}
    
    \def\addlegendimage{\csname pgfplots@addlegendimage\endcsname}

\theoremstyle{definition}
\newtheorem{thm}{Theorem}
\newtheorem{lem}[thm]{Lemma}

\newtheorem{prop}[thm]{Proposition}
\newtheorem{main}[thm]{Main~Result}
\newtheorem*{rem*}{Remark}
\newtheorem*{ex*}{Example}
\newtheorem*{defin*}{Definition}
\newtheorem*{thm*}{Theorem}
\newtheorem*{lem*}{Lemma}
\newtheorem*{cor*}{Corollary}
\newtheorem*{def*}{Definition}
\newtheorem*{main_one}{First Main Result}
\newtheorem*{main_two}{Second Main Result}


\DeclarePairedDelimiter{\rb}{\lparen}{\rparen}

\DeclarePairedDelimiter{\abs}{\lvert}{\rvert}

\DeclarePairedDelimiterX{\norm}[1]{\lVert}{\rVert}{\ifblank{#1}{\placeholder}{#1}}
\DeclarePairedDelimiterX{\ip}[2]{\langle}{\rangle}{#1 , #2}
\providecommand\given{}
\newcommand{\SetSymbol}[1][]{%
\nonscript\ #1\vert
\allowbreak
\nonscript\
\mathopen{}
}

\DeclarePairedDelimiterX{\Set}[1]\{\}{%
\renewcommand\given{\SetSymbol[\delimsize]}
#1
}


\newcommand{\id}{\operatorname{id}}

\newcommand{\slim}{\operatorname*{s-lim}}

\renewcommand{\Im}{\operatorname{Im}}

\newcommand{\conj}[1]{\overline{#1}}

\newcommand{\bra}[1]{\langle #1|}
\newcommand{\ket}[1]{|#1\rangle}

\newcommand{\supind}[1]{^{\scaleobj{0.8}{(#1)}}}

\newcommand{\emp}[2][b]{\if b#1\textbf{#2}\else\textit{#2}\fi}

\renewcommand{\qquad}{\hspace{60pt}}

\newcommand{\sref}[2]{\hyperref[#2]{#1 \ref*{#2}}}

\newcommand{\rme}{\mathrm{e}}
\newcommand{\rmi}{\mathrm{i}}


\newcommand{\1}{\mathds{1}}

\newcommand{\CCC}{\mathscr{C}}

\newcommand{\D}{\mathcal{D}}
\newcommand{\DD}{\mathcal{D}}
\newcommand{\DDD}{\mathscr{D}}
\renewcommand{\d}{\ \textrm{d}}

\renewcommand{\H}{\mathcal{H}}

\newcommand{\NN}{\mathbb{N}}

\newcommand{\RR}{\mathbb{R}}
\newcommand{\RRR}{\mathscr{R}}

\newcommand{\T}{\mathcal{T}}

\newcommand{\V}{\mathcal{V}}

\newcommand{\W}{\mathcal{W}}

\newcommand\dom[1]{\mathcal{D}(#1)}
\def\placeholder{{\hspace{1pt}\,{\mathbin{\vcenter{\hbox{\scalebox{0.5}{$\bullet$}}}}}\hspace{1pt}\,}}
\renewcommand\Im{\mathrm{Im}\,}

\makeatletter
\renewcommand{\p@subsection}{}
\makeatother

\begin{document}

	\title{State-dependent Trotter Limits and their approximations}
	\author{Daniel Burgarth}
	\affiliation{Center for Engineered Quantum Systems, Macquarie University, 2109 NSW, Australia}
	\affiliation{Physics Department, Friedrich-Alexander Universit{\"a}t of Erlangen-Nuremberg, Staudtstr. 7, 91058 Erlangen, Germany}
	\author{Niklas Galke}
	\affiliation{F\'{\i}sica Te\`{o}rica: Informaci\'{o} i Fen\`{o}mens Qu\`{a}ntics, Departament de F\'{\i}sica, Universitat Aut\`{o}noma de Barcelona, 08193 Bellaterra, Spain}
	\author{Alexander Hahn}
	\affiliation{Center for Engineered Quantum Systems, Macquarie University, 2109 NSW, Australia}
	\author{Lauritz van Luijk}
	\affiliation{Institut f{\"u}r Theoretische Physik, Leibniz Universit{\"a}t Hannover, Appelstra{\ss}e 2, 30167 Hannover, Germany}

	\begin{abstract}
		In this paper, we study how to discretize the Trotter product formula of operators with continuous degrees of freedom -- such as the position of particles in molecules, or the amplitude of electromagnetic fields -- to make them amenable to digital simulations. Usually, such systems are simulated numerically via the Trotter product formula of their discretized approximations, but this approach can potentially lead to fallacious results. Here, we find sufficient conditions to conclude the validity of this approximate discretized physics. We develop a scheme to verify numerical results involving Trotterization of truncated operators. Essentially, it depends on the state-dependent Trotter error of the latter, for which we establish explicit bounds. These bounds are of independent interest and may also find applications in quantum chemistry.
	\end{abstract}
	
	\maketitle

	\emph{Introduction.}---The Trotter product formula is a central tool in the study of quantum dynamics with a vast amount of applications, such as quantum computing \cite{Feynman1982,Lloyd1996,Poulin2015,Childs2021}, quantum field theory \cite{Klauber2013,Simon2005,Nicola2019,Gaveau2004} and its path integral formulation \cite{Johnson2002}, quantum control \cite{Misra1977,Arenz2018,Arenz2017}, open quantum systems \cite{Rivas2012}, quantum optics \cite{Fitzek2020}, Floquet dynamics \cite{Sieberer2019,Chinni2022,Kargi2021} and quantum many body systems \cite{Verstraete2004,Zwolak2004,Vidal2004,Cincio2008}. It allows to split the time-evolution of a quantum system, that is hard or cost-expensive to implement, into simpler components. Due to this feature, it has become the standard contraption for numerically simulating the dynamics of large quantum systems.
	Nevertheless, two important questions regarding these simulations stay open. On one hand, it is not clear how the error due to the Trotter approximation depends on the input state of the system. For example, one would expect that the approximation becomes better when only low-energy states are considered.
	On the other hand, digital computers -- no matter if classical or quantum -- can only simulate a finite number of levels. However, many relevant physical systems are actually represented by continuous degrees of freedom, i.e.\ infinite dimensions. Therefore, the common practice is to truncate the Hilbert space on some finite level \cite{Lambert2019,Jeckelmann1998,Kargi2021,Vidal2007} and deal with the finite-dimensional truncated operators. These truncations, however, are only approximations to the full system. Hence in total, this simulation procedure involves two simultaneous approximations: First, approximating the time-evolution of the full quantum system by some finite-dimensional truncations; second, approximating the full time-evolution of the quantum system by the Trotter product formula. We summarize the interplay of the two approximations in Fig.~\ref{fig:sketch}.
	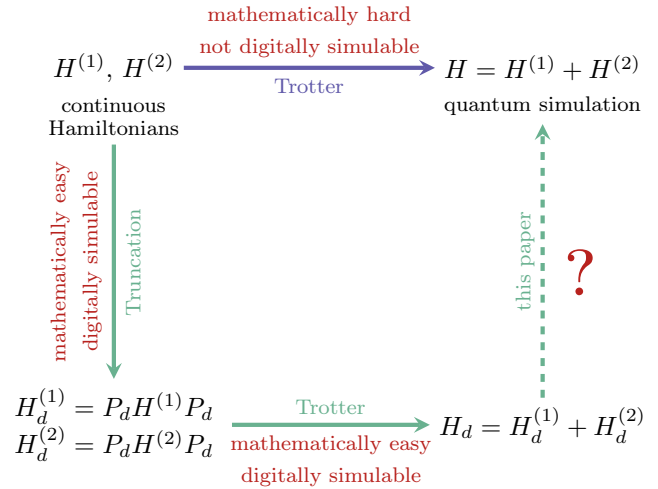
\begin{figure}
		\begin{center}
			\begin{tikzpicture}[node distance=0.5cm]
				\node[anchor=center] (H12) at (0,0) {$H^{(1)}$, $H^{(2)}$};
				\node[below=0.5cm of H12.center, anchor=center] (textH121) {\footnotesize continuous};
				\node[below=0.3cm of textH121.center, anchor=center] (textH122) {\footnotesize Hamiltonians};
				\node[anchor=center] (H) at (5.7,0) {$H=H^{(1)}+H^{(2)}$};
				\node[below=0.5cm of H.center, anchor=center] (textH) {\footnotesize quantum simulation};
				\node[anchor=center] (H1n) at (0,-4.5) {$H_d^{(1)}=P_dH^{(1)}P_d$};
				\node[below=0.5cm of H1n.center, anchor=center] (H2n) {$H_d^{(2)}=P_dH^{(2)}P_d$};
				\node[below=0.25cm of H1n.east, anchor=center] (H12n) {};
				\node[anchor=center] (Hn) at (5.7,-4.75) {$H_d=H_d^{(1)}+H_d^{(2)}$};
				\draw[-stealth,color=4,ultra thick] (H12) -- node[below] {\footnotesize Trotter} node[above,align=center,color=2] {\footnotesize mathematically hard\\\footnotesize not digitally simulable} ++ (H);
				\draw[-stealth,color=3,ultra thick] (textH122) -- node[rotate=90,right=of H1n,anchor=south] {\footnotesize Truncation} node[rotate=90,left=of H1n,anchor=center,align=center,color=2] {\footnotesize mathematically easy\\\footnotesize digitally simulable} ++ (H1n);
				\draw[-stealth,color=3,ultra thick] (H12n) -- node[below,align=center,color=2] {\footnotesize mathematically easy\\\footnotesize digitally simulable} node[above] {\footnotesize Trotter} ++ (Hn);
				\draw[-stealth,color=3,ultra thick,dashed] (Hn) -- node[rotate=90,left=of Hn,anchor=north] {\footnotesize this paper} ++ (textH);
				\node[color=2] (qm) at (6.2,-2.725) {\Huge \bf{?}};
			\end{tikzpicture}
		\end{center}
		\vspace{-1\baselineskip}
		\caption{In the simulation of a quantum system one starts with two simple continuous-degree Hamiltonians $H^{(1)}$ and $H^{(2)}$. Then, the goal is to simulate the dynamics under $H=H^{(1)}+H^{(2)}$ through the Trotter product formula (blue path). However, due to its continuous nature, it is not possible to implement this strategy on digital simulators such as computers. Therefore, it is common practice to follow the green path instead. This is, one first truncates $H^{(1)}$ and $H^{(2)}$ to a finite level $d$. Afterwards, one implements the dynamics under the sum of the truncated Hamiltonians $H_d=H_d^{(1)}+H_d^{(2)}$ via the Trotter product formula. Unfortunately, $H_d$ does not necessarily cover the physics of the full model $H$ correctly, so that this method can lead to deceptive results. Here, we establish a method to close this gap, which is indicated by the dashed green arrow.} \label{fig:sketch}
		\vspace{-1\baselineskip}
	\end{figure}
	A priori, it is not self-evident whether they are compatible with each other. In fact, we will discuss a simple and well-known example, where it seems that they are not. In this context, the main obstacle is the question of convergence, i.e.: Is the infinite-dimensional Trotter product formula a valid approximation? Whilst it always converges in finite dimensions \cite{Childs2021,Suzuki1985}, it only converges under certain conditions in the infinite-dimensional case \cite{Kato1978,Lapidus1981}. Therefore, a rigorous numerical treatment of infinite-dimensional truncated operators would need to additionally deliver a reasoning why the Trotterization and truncation procedures are nonconflicting.
	Such a discussion about the validity of the simulations is missing in the literature, making numerical results not fully conclusive.
	
	In this paper, we address both of the aforementioned questions and show that they are closely related. That is, we first establish state-dependent bounds for the Trotter error. Afterwards, we use them to provide sufficient conditions on the truncations, which ensure the validity of simultaneous Trotterization. Firstly, as a technical condition, the Hamiltonians of consideration have to be jointly approximable by the \emph{same} truncation scheme. Secondly, as one would expect intuitively, the state-dependent error of the finite-dimensional Trotter approximation has to eventually saturate in the truncation dimension. In this case, our method allows to infer bounds for the Trotter error in infinite dimensions from the state-dependent ones of their finite-dimensional truncations. These results have applications to a wide class of physical models. Not only do even the simplest textbook examples such as the quantum harmonic oscillator and the particle in a box have continuous degrees. But also e.g.\ in quantum optics, the Rabi-model \cite{Rabi1936,Rabi1937} and the spin-boson model \cite{Jaynes1963}, which describe the interaction of matter with light, are infinite-dimensional Hamiltonians. In quantum chemistry, atoms and molecules have continuous positions and momenta. The simulation of their chemical behaviours has been proposed as an impactful application of the emerging field of quantum simulation \cite{Feynman1982,Lloyd1996}. In addition, modern quantum information makes heavy use of infinite-dimensional systems, such as GKP qubits \cite{Gottesman2001} or superconducting transmon qubits \cite{Koch2007,Schreier2008}, which were the platform for the ``quantum supremacy'' experiment \cite{Arute2019}.
	
	Additionally, our state-dependent Trotter bound is of independent and broader interest in the field of quantum information and computation beyond the scope of this paper. Recently, such state-dependent Trotter error bounds enjoy a considerable interest in particular in the area of Hamiltonian simulation. For instance in \cite{Sahinoglu2021,Yi2022}, the asymptotic scaling of the finite-dimensional Trotter error is derived when the input state is only supported in a (low-energy) subspace. These results rely on strong assumptions on the respective Hamiltonians in order to bound the leakage from this (low-energy) subspace. The authors in \cite{Becker2021, Luijk2022} study energy-constrained distances between time evolutions generated by Hamiltonians with continuous degrees of freedom. Their results have applications in the realm of Trotterization but only hold for special Hamiltonians generating a particular type of dynamics. In addition, in \cite{Jahnke2000,An2021} state-dependent Trotter error bounds are derived for certain (time-dependent) Hamiltonians. These bounds rely on restrictive assumptions on  the structure of the commutator applied to the input state. The advantage of our bound, compared to the existing bounds in the literature, is its validity \emph{for all} finite-dimensional Hamiltonians. Furthermore, it is an \emph{explicit} bound, which shows a particularly simple dependence on the input state. By taking linear combinations of input vectors, we are immediately able to bound the Trotter convergence speed in arbitrary subspaces.

	\emph{Setting.}---Before turning to our main results, let us briefly summarise the setting, in which the Trotter product formula and its truncated versions are studied. A more elaborate introduction is given in the Supplementary Material, Sec.~\hyperlink{app:preliminaries}{A}. We will consider two Hamiltonians $H^{(1)}$ and $H^{(2)}$ acting on an infinite-dimensional Hilbert space $\H$ with countable basis. This is, two self-adjoint operators, which are defined on a dense domain of input states $\D(H^{(1)})$, $\D(H^{(2)})$, respectively. The domains encode important features of the physical system, such as boundary conditions. It often suffices to look at a core, which is a particular subspace of the domain that already carries all the relevant information. See the Supplementary Material, Sec.~\hyperlink{app:preliminaries}{A} for details. We do not make any assumptions on the spectrum of the Hamiltonians in order to accommodate for both purely continuous and (infinitely many) discrete degrees of freedom. The time-evolution under each Hamiltonian is given by a unitary operator $U^{(1)}(t)=\rme^{-\rmi tH^{(1)}}$, $U^{(2)}(t)=\rme^{-\rmi tH^{(2)}}$, respectively. Now, the Trotter product formula, \emph{if} converging, approximates the dynamics $U(t)=\rme^{-\rmi tH}$ of a self-adjoint operator $H$ by the $n$-fold product of the individual unitaries $X(t/n)^n\equiv[U^{(1)}(t/n)U^{(2)}(t/n)]^n$. On the common core $H$ restricts to the sum of $H^{(1)}$ and $H^{(2)}$, see the Supplementary Material, Sec.~\hyperlink{app:proof}{C}.  In order to construct finite-dimensional approximations of the above operators, we will first fix a basis $\Set{\ket{j}:j\in\NN_0}$ of the Hilbert space $\H$. Then, we obtain a truncated, finite-dimensional Hilbert space $V_d = \mathrm{span}\Set{\ket 0,\dots,\ket{d-1}}$ of dimension $d<\infty$ simply by gathering the first $d$ basis vectors. We can project any vector in $\H$ onto $V_d$ with $P_d=\sum_{j=0}^{d-1}\ket{j}\bra{j}$. Similarly, for our Hamiltonians we receive their truncated versions by $H_d^{(1)}=P_dH^{(1)}P_d$, $H_d^{(2)}=P_dH^{(2)}P_d$ and $H_d=P_dHP_d$, if existing. All the truncated Hamiltonians only act on the finite-dimensional $V_d$ and can, therefore, be represented by Hermitian $d\times d$ matrices. They generate finite-dimensional unitary dynamics on $V_d$ via $U_d^{(1)}(t)=\rme^{-\rmi t H_d^{(1)}}$, $U_d^{(2)}(t)=\rme^{-\rmi t H_d^{(2)}}$ and $U_d(t)=\rme^{-\rmi t H_d}$, respectively.
	One goal of this paper is to give conditions on the finite-dimensional truncated Trotter product formula which imply the convergence of the full infinite-dimensional case. We will explain in the Supplementary Material, Sec.~\hyperlink{app:preliminaries}{A} that it is necessary to include the dependence on an input state (also see \cite[Sec.~3]{Ichinose2004} and \cite[Sec.~VIII.8]{ReedSimon1981}). That is, by studying $[U_d^{(1)}(t/n)U_d^{(2)}(t/n)]^n\ket{\psi_d}\xrightarrow{n\to\infty} U_d(t)\ket{\psi_d}$, we would like to draw conclusions about $[U^{(1)}(t/n)U^{(2)}(t/n)]^n\ket{\psi}\xrightarrow{n\to\infty} U(t)\ket{\psi}$, where $\ket{\psi_d}=P_d\ket{\psi}$. To this end, we introduce the state-dependent Trotter error in the finite-dimensional case
	\begin{align}
		&b_d^{(n)}(\ket{\psi};t):=\nonumber\\&\big\Vert \big([U_d^{(1)}(t/n)U_d^{(2)}(t/n)]^n-U_d(t)\big)\ket{\psi_d}\big\Vert,
	\end{align}
	from which we can inherit a dimension-independent Trotter error by taking the limit $d\to\infty$
	\begin{equation}
		b^{(n)}(\ket{\psi} ; t) := \limsup_{d\to\infty}  b_d\supind n(\ket{\psi};t).\label{eq:Trotter_error_infinite}
	\end{equation}
	Notice that $b^{(n)}(\ket{\psi} ; t)$ might reach the maximum distance of $2$ even though $b_d^{(n)}(\ket{\psi};t)$ is considerably smaller for a relatively large $d$.

	\emph{Main Results.}---Our first main result is an explicit state-dependent Trotter error bound for arbitrary \emph{finite-dimensional} systems. Usually, when computing the norm Trotter error, one would reduce the difference $X(t/n)^n-U(t)$ to one period $X(t/n)-U(t/n)$ by using a telescope sum identity, see for instance \cite{Suzuki1985,Childs2021}. This approach introduces additional unitaries, which can be removed through sub-multiplicativity of the matrix norm. However in the state-dependent case, these unitaries lead to a non-trivial rotation of the input state, which hinges on the total evolution time $t$. For this reason, we need to follow a conceptionally different approach than all other proofs of Trotter convergence in the literature. Our proof is inspired by \cite[Lemma~1]{Burgarth2022}. We treat the generator of the Trotterized evolution $X(t/n)^n$ as a $\frac{2t}{n}$-periodic time-dependent Hamiltonian $\tilde{H}(s)$, which is repeatedly switching between $H^{(1)}$ and $H^{(2)}$ after a time $s=\frac{t}{n}$. Then, the distance between the Trotterized and the target evolution is related to an integral action $S_{21}(\tau)=\int_0^\tau \frac{1}{2}H-\tilde{H}(s)\mathrm{d}s$, which vanishes after a full cycle, $S_{21}(2t)=0$. Thus, applied to an eigenstate of $H$, the integral action can essentially be bounded by the maximum of the distance between $H^{(1)}$ or $H^{(2)}$ and the average generator $\frac{1}{2}H$. See the Supplementary Material, Sec.~\hyperlink{app:proof_bound}{B}. In total, we obtain the following bound.

	\begin{main}[Finite-dimensional state-dependent Trotter error bound]\label{thm:finite_bounds}
		Consider the case of the \emph{finite-dimensional} Trotter formula, i.e.\ the two Hamiltonians $H^{(1)}$ and $H^{(2)}$ are acting on a finite-dimensional Hilbert space $\H$. If $\ket{\varphi}\in\H$ is an eigenstate of $H^{(1)}+H^{(2)}$ according to the eigenvalue equation $(H^{(1)}+H^{(2)})\ket{\varphi}=h\ket{\varphi}$, then the Trotter error is bounded by
		\begin{align}
			&\left\Vert\left([U^{(1)}(t/n)U^{(2)}(t/n)]^n-U(t)\right)\ket{\varphi}\right\Vert\nonumber\\&\leq
			\frac{t^2}{2n}\left(\Big\Vert \Big[H^{(1)}-\frac{h}{2}\Big]^2\ket{\varphi}\Big\Vert + \Big\Vert \Big[H^{(2)}-\frac{h}{2}\Big]^2\ket{\varphi}\Big\Vert\right).
		\end{align}
	\end{main}
	
	The usual error bounds in a matrix norm essentially quantify a worst-case scenario. Instead, this state-dependent bound still accounts for the relationship between the particular input state $\ket{\varphi}$ and the fidelity of Trotterization. Therefore, it can lead to a tighter scaling for certain $\ket{\varphi}$. For instance, this is crucial when considering infinite-dimensional Trotter problems. Approximating those with increasing cutoff dimensions $d$ adds more and more states to the Hilbert space $V_d$. In turn, the worst-case Trotter error grows with the truncation dimension. Since one is ultimately interested in the limit $d\to\infty$, this is a rather unfavourable property and state-dependent error bounds are mandatory. We study these truncated product formulas in our next main result. It manifests a method to check convergence of the Trotter product formula which is particularly well-suited to numerical simulations. Intuitively, if one seeks out that the state-dependent Trotter error $b_d^{(n)}(\ket{\psi};t)$ saturates with increasing $d$, one would expect that the infinite-dimensional analogue converges for this particular $\ket{\psi}$. Indeed, this is what we find.

	\begin{main}[Truncation of an infinite-dimensional Trotter product]\label{thm:mainthm}
		The \emph{infinite-dimensional} Trotter product formula converges for all $\ket{\psi}\in\H$,
		\begin{equation}
			\rb*{U\supind1(t/n)U\supind2(t/n)}^n\ket{\psi} \xrightarrow{n\to\infty} U(t)\ket{\psi},
		\end{equation}
		if the following two conditions are satisfied:
		\begin{enumerate}[(i)]
			\item $H^{(1)}$ and $H^{(2)}$ can be simultaneously approximated with $H_d^{(1)}$ and $H_d^{(2)}$ by the same truncation scheme. That is, $\V=\bigcup_d V_d$ is a common core of $H^{(1)}$ and $H^{(2)}$.
			\item For all total evolution times $t\in\RR$, the dimension-independent Trotter error  goes to zero, $b^{(n)}(\ket{\psi};t) \to 0$, as $n\to\infty$.
		\end{enumerate}
		In this case, $U(t)$ gives a unitary dynamics, whose generator $H$ is self-adjoint and agrees with $(H\supind1+H\supind2)$, wherever both are defined. A generalized version to non-unitary dynamics can be found in the Supplementary Material, Sec.~\hyperlink{app:proof}{C}, see Thm.~\ref{thm:main}.
	\end{main}
	
In order to prove this, we first show that $U_d(t)\ket{\psi}$ has a limit as $d\to\infty$, i.e.\ the finite-dimensional approximations applied to a state $\ket{\psi}$ converge to a well-defined $\ket{\phi(t)}$. 
Since the Trotter limit $n\to\infty$ is always well-defined in finite dimensions \cite{Suzuki1985}, the $U_d(t)$ can be obtained by Trotterization. 
Then, we prove that the infinite-dimensional Trotter formula, $X(t/n)^n\ket{\psi}\to\ket{\phi(t)}$ as $n\to\infty$, converges to the same limit $\ket{\phi(t)}$.
The last step is to show that this limit is indeed governed by a unitary time-evolution, whose generating Hamiltonian agrees with $H^{(1)}+H^{(2)}$ on $\mathcal V$  These ramifications are derived using a modification of the \emph{Trotter-Kato approximation theorems} \cite[Thm.~4.8, Thm.~4.9]{EN01} inspired by \cite[Thm.~2.3]{Duffield1992}. Notice that we only have $\V\subsetneq\H$, thus in particular $\V\neq\H$. Furthermore, two Hamiltonians may have a non-overlapping domains, see e.g.\ \cite[Example~3.8]{Arenz2018}. This situation is excluded by our common core assumption.

	Our truncation main result establishes a way to obtain explicit error bounds for the infinite-dimensional Trotter product formula once the finite-dimensional case is under control: A simple calculation shows that (see the Supplementary Material, Sec.~\hyperlink{app:proof}{C})
	\begin{align}
		b^{(n)}(\ket{\psi};t)
		\leq \rb[\bigg]{\sum_{j=1}^d \big|\langle j | \psi \rangle\big|^2 b^{(n)}(\ket j;t)^2 }^{1/2}.\label{eq:error_bound_infinite}
	\end{align}
	Thus, in order to conclude that the Trotter product formula converges for all $\ket{\psi}\in\H$, it suffices to require $b^{(n)}\rb*{\ket{j}; t}\to 0$ as $n \to\infty$ for all basis vectors $\ket{j}$.
	This is because if $\ket{\psi} \in \V$ -- and hence $\ket{\psi} \in V_d$ for sufficiently large $d$ -- then Eq.~\eqref{eq:error_bound_infinite} also goes to zero.

	\emph{Examples}.---Let us numerically study an example of a convergent and a non-convergent Trotter scenario. We denote the position operator with $Q$ and the momentum operator with $P$. The truncation is done in the Fock basis $\Set{\ket{m}}$, which is a common core of all the considered operators. That is, we project with $P_d=\sum_{m=0}^{d-1}\ket{m}\bra{m}$ onto the finite-dimensional Hilbert space $V_d$. Fig.~\ref{fig:strong_HO_squeezing_small} shows the case of $H^{(1)}=\frac{1}{2}\left(Q^2+P^2\right)$, i.e.\ the quantum harmonic oscillator, and $H^{(2)}=\frac{1}{2}\left(QP+PQ\right)$, i.e.\ the Hamiltonian which generates the squeezing transformation. It is known that this infinite-dimensional Trotter problem converges for all $\ket{\psi}\in\H$, see for instance \cite{Lapidus1981}. Indeed, we numerically find that the Trotter error on the finite-dimensional truncations $V_d$ is bounded independent of the dimension of truncation for all considered Fock basis states $\ket{m}$. For a non-convergent Trotter scenario, we look at the two operators $H^{(1)}=Q^3$ and $H^{(2)}=P^2$ in Fig.~\ref{fig:strong_Q3_P2_samll}. The corresponding sum $H^{(1)}+H^{(2)}=P^2+Q^3$ describes a particle in a $Q^3$ potential. This Trotter problem does not converge as even in the classical version of this Hamiltonian, the particle would escape to infinity at finite times \cite{Zhu1993}. Our numerical study also reveals that the Trotter error on the finite-dimensional truncations $V_d$ does not saturate in the truncation dimension $d$. Instead, we observe a phase transition to a chaotic behaviour similar to \cite{Sieberer2019,Kargi2021}.
	
	Through our state-dependent error bound, we can also perform an analytical examination in certain cases. For this, consider $H^{(1)}=\frac{1}{2}Q^2$ and $H^{(2)}=\frac{1}{2}P^2$ and truncate at level $d$ in the Fock basis $\Set{\ket{m}}$ as before. We then find the analytic bound
	\begin{align}
		b_d^{(n)}&(\ket{m};t)\nonumber\\&\leq
		\frac{t^{2}}{4n}\sqrt{\frac{3}{8} \left(m (m+1) \left(m^2+m+14\right)+10\right)},\label{Trotter_X2P2}
	\end{align} 
	see the Supplementary Material, Sec.~\hyperlink{app:example}{D}. This bound does not depend on the cutoff dimension $d$. Hence, by our truncation main result the full continuous Trotter problem converges and its Trotter error can be quantified for Fock states explicitly by Eq.~\eqref{Trotter_X2P2}. For a numerical comparison, see Fig.~\ref{fig:strong_Q2_P2_small}.

	\begin{figure}
		\begin{tikzpicture}[mark size={0.5mm}, scale=1]
			\pgfplotsset{%
				width=.5\textwidth,
				height=.35\textwidth
			}
			\begin{axis}[
				ylabel near ticks,
				xlabel={Dimension of truncation $d$},
				ylabel={Trotter error $b_d^{(n)}(\ket{m};t)$},
				x post scale=1,
				y post scale=1,
				legend pos=north west,
				legend cell align={left},
				legend columns=2,
				label style={font=\footnotesize},
				tick label style={font=\footnotesize},
				ytick={0,0.05,0.1,0.15},
				yticklabels={0,0.05,0.1,0.15},
				]
				\addplot[color=1, only marks] table[x=n, y=1, col sep=comma]{StrongHOsq_small.csv};
				\addplot[color=2, only marks] table[x=n, y=2, col sep=comma]{StrongHOsq_small.csv};
				\addplot[color=3, only marks] table[x=n, y=3, col sep=comma]{StrongHOsq_small.csv};
				\addplot[color=4, only marks] table[x=n, y=4, col sep=comma]{StrongHOsq_small.csv};
				\addplot[color=5, only marks] table[x=n, y=5, col sep=comma]{StrongHOsq_small.csv};
				\legend{\footnotesize$\ket{0}$, \footnotesize$\ket{1}$, \footnotesize$\ket{2}$, \footnotesize$\ket{3}$, \footnotesize$\ket{4}$};
			\end{axis}
		\end{tikzpicture}
		\vspace{-0.5\baselineskip}
		\caption{State-dependent Trotter error for the operators $H^{(1)}=\frac{1}{2}\left(Q^2+P^2\right)$ and $H^{(2)}=\frac{1}{2}\left(QP+PQ\right)$. We consider the first five Fock-basis states $\Set{\ket{m}}=\Set{\ket{0},\dots,\ket{4}}$ for different dimensions of truncations $d=1,\dots,300$. The total evolution time is fixed to $t=3$ and the number of Trotter steps is $n=1000$. The saturation of the error indicates the convergence of this Trotter problem.}
		\label{fig:strong_HO_squeezing_small}
	\end{figure}
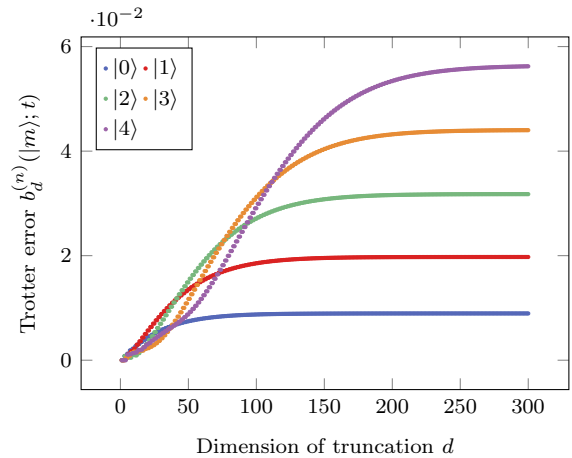
	
	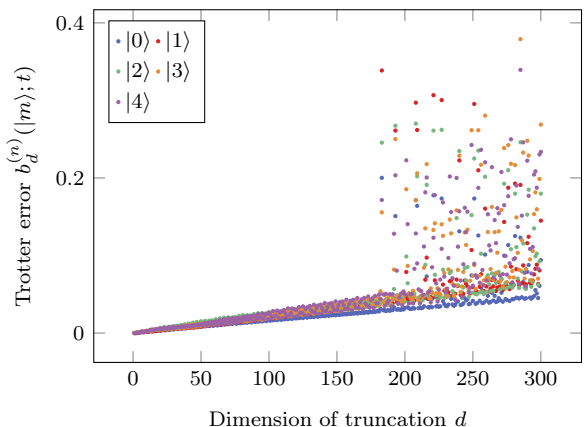
\begin{figure}
		\begin{tikzpicture}[mark size={0.5mm}, scale=1]
			\pgfplotsset{%
				width=.5\textwidth,
				height=.35\textwidth
			}
			\begin{axis}[
				ylabel near ticks,
				xlabel={Dimension of truncation $d$},
				ylabel={Trotter error $b_d^{(n)}(\ket{m};t)$},
				x post scale=1,
				y post scale=1,
				legend pos=north west,
				legend cell align={left},
				legend columns=2,
				label style={font=\footnotesize},
				tick label style={font=\footnotesize},
				]
				\addplot[color=1, only marks] table[x=n, y=1, col sep=comma]{StrongX3P2_small.csv};
				\addplot[color=2, only marks] table[x=n, y=2, col sep=comma]{StrongX3P2_small.csv};
				\addplot[color=3, only marks] table[x=n, y=3, col sep=comma]{StrongX3P2_small.csv};
				\addplot[color=4, only marks] table[x=n, y=4, col sep=comma]{StrongX3P2_small.csv};
				\addplot[color=5, only marks] table[x=n, y=5, col sep=comma]{StrongX3P2_small.csv};
				\legend{\footnotesize$\ket{0}$, \footnotesize$\ket{1}$, \footnotesize$\ket{2}$, \footnotesize$\ket{3}$, \footnotesize$\ket{4}$};
			\end{axis}
		\end{tikzpicture}
		\vspace{-0.5\baselineskip}
		\caption{State-dependent error for the operators $H^{(1)}=Q^3$ and $H^{(2)}=P^2$. We consider the first five Fock-basis states $\Set{\ket{m}}=\Set{\ket{0},\dots,\ket{4}}$ for different dimensions of truncations $d=1,\dots,300$. The total evolution time is fixed to $t=1$ and the number of Trotter steps is $n=1000$. The Trotter error does not saturate, which shows that this Trotter problem does not converge.}
		\label{fig:strong_Q3_P2_samll}
		\vspace{-1\baselineskip}
	\end{figure}

	\begin{figure}
		\begin{tikzpicture}[mark size={0.5mm}, scale=1]
			\pgfplotsset{%
				width=.5\textwidth,
				height=.35\textwidth
			}
			\begin{axis}[
				ylabel near ticks,
				xlabel={Dimension of truncation $d$},
				ylabel={Trotter error $b_d^{(n)}(\ket{m};t)$},
				x post scale=1,
				y post scale=1,
				legend pos=north west,
				legend cell align={left},
				legend columns=2,
				label style={font=\footnotesize},
				tick label style={font=\footnotesize},
				]
				\addplot[color=1, only marks] table[x=n, y=1, col sep=comma]{StrongX2P2_small.csv};
				\addplot[color=2, only marks] table[x=n, y=2, col sep=comma]{StrongX2P2_small.csv};
				\addplot[color=3, only marks] table[x=n, y=3, col sep=comma]{StrongX2P2_small.csv};
				\addplot[color=4, only marks] table[x=n, y=4, col sep=comma]{StrongX2P2_small.csv};
				\addplot[color=5, only marks] table[x=n, y=5, col sep=comma]{StrongX2P2_small.csv};
				\addplot[color=1, ultra thick, domain=1:50] {sqrt(3/5)/1600};
				\addplot[color=2, ultra thick, domain=1:50] {3*sqrt(7)/8000};
				\addplot[color=3, ultra thick, domain=1:50] {sqrt(39/5)/1600};
				\addplot[color=4, ultra thick, domain=1:50] {sqrt(483)/8000};
				\addplot[color=5, ultra thick, domain=1:50] {3*sqrt(23/5)/1600};
				\legend{\footnotesize$\ket{0}$, \footnotesize$\ket{1}$, \footnotesize$\ket{2}$, \footnotesize$\ket{3}$, \footnotesize$\ket{4}$};
			\end{axis}
		\end{tikzpicture}
		\vspace{-0.5\baselineskip}
		\caption{State-dependent error for the operators $H^{(1)}=\frac{1}{2}Q^2$ and $H^{(2)}=\frac{1}{2}P^2$. We consider the first five Fock-basis states $\Set{\ket{m}}=\Set{\ket{0},\dots,\ket{4}}$ for different dimensions of truncations $d=1,\dots,50$. The total evolution time is fixed to $t=1$ and the number of Trotter steps is $n=1000$. The dots are a numerical simulation, whereas the lines show the explicit error bounds from Eq.~\eqref{Trotter_X2P2}. Since the Trotter error can be bounded independently of the truncation dimension, this Trotter problem converges.}
		\label{fig:strong_Q2_P2_small}
		\vspace{-1\baselineskip}
	\end{figure}
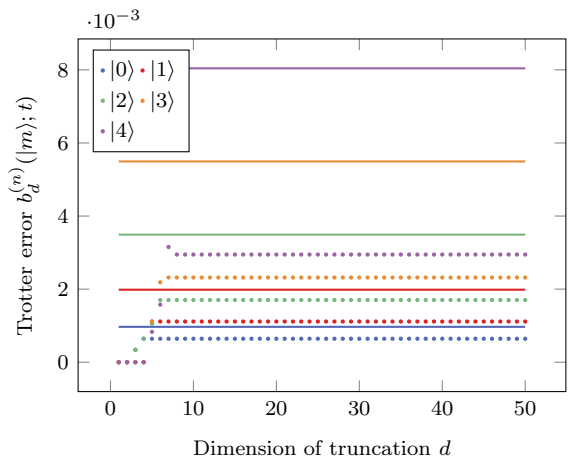

	\emph{Conclusion.}---From a practical perspective, our truncation main result reduces the complexity of determining Trotter convergence and finding infinite-dimensional error bounds to just computing finite-dimensional error bounds, which saturate in the truncation dimension. The latter may be obtained through our state-dependent error bounds from the first main result. This makes the common practice of numerical Trotter simulations rigorous.
	From a fundamental perspective, it would be great to have a generic method in order to analytically check the convergence of the infinite-dimensional Trotter product formula in a general setting. This is a hard problem and only very little literature in the mathematical physics community exists on this topic. Our result can extend to such a method in certain cases. For this, one has to be able to apply our finite-dimensional Trotter bounds to the truncation scenario. In particular, one needs that there are eigenstates of the infinite-dimensional target Hamiltonian that stay eigenstates of the truncated operators. This property does not hold in general and therefore sets a practical limitation on our method. For example, one would expect this property to break down if the infinite-dimensional target Hamiltonian does not have eigenstates. In such a case, when our first main result is not applicable to the truncation scenario but the assumptions of our second main result still hold, one might get indication for Trotter convergence by numerically studying the finite-dimensional state-dependent Trotter error. Of course, generalizing our Trotter error bounds to arbitrary input states would make this obsolete: If such a bound existed, our truncation main result would give a generic method to analytically check Trotter convergence. Both of our main results naturally generalize to Trotter products of more than two operators.

	DB thanks Paolo Facchi and Dominic Berry for interesting discussions. AH would like to thank Cahit Karg{\i} for a helpful exchange on the numerical simulations.
	LvL thanks Reinhard F.\ Werner for helpful discussions and suggestions.
	DB acknowledges funding by the Australian Research Council (project numbers FT190100106, DP210101367, CE170100009).
	NG was supported by MCIN with funding from European Union NextGenerationEU(PRTR-C17.I1) and by the Generalitat de Catalunya.
	AH was supported by the Sydney Quantum Academy.
	LvL acknowledges support by the Quantum Valley Lower Saxony.

	\bibliographystyle{prsty-title-hyperref}
	\bibliography{Finite_trotter}

	\newpage
	
	\onecolumngrid

	\begin{center}
		{\Large\textbf{Supplementary Material}\par}
	\end{center}
	
		\hypertarget{app:preliminaries}{}
	\section{A. Mathematical Preliminaries}
	
	In this section, we introduce all the mathematical concepts and the notation needed for the proof of our main results. We also elaborate more on the mathematical setting and motivate why we have to consider state-dependent error bounds instead of norm error bounds.
	
	Consider a separable infinite-dimensional Hilbert space $\H$ with an orthonormal basis $\Set{\ket{j}}$.
	A corresponding sequence of truncated finite-dimensional Hilbert spaces of increasing dimension $d<\infty$ is then given by $V_d = \mathrm{span}\Set{\ket 0,\dots,\ket{d-1}}\subset\H$.
	Denote the orthogonal projection onto $V_d$ by $P_d=\sum_{j=0}^{d-1}\ket{j}\bra{j}$.
	We denote the union of the truncated Hilbert spaces by $\V:=\bigcup_d V_d$. Since $\{\ket j \}$ is a basis of $\H$, $\V$ lies dense in $\H$, i.e.\ $\V^\perp=\Set{0}$.
	The scalar product of $\H$ is denoted by $\langle\placeholder|\placeholder\rangle$ and its induced norm by $\Vert\psi\Vert=\sqrt{\langle\psi|\psi\rangle}$, $\psi\in\H$.
	
	A \emph{strongly continuous contraction semigroup} is a set of linear operators $T(t)$ on $\H$, $t\ge 0$, with the following properties
	\begin{enumerate}
		\item semigroup: for all $t,s\in\RR_{\geq 0}$, $T(t)T(s)=T(t+s)$ and $T(0)=\id$,
		\item contraction: $\|T(t)\ket{\psi}\|\leq\|\ket{\psi}\|$, for all $\ket{\psi}\in\H$,
		\item strong continuity: For all $\ket{\psi}\in\H$, $\lim_{t\downarrow 0}\|T(t)\ket{\psi}-\ket{\psi}\|=0$.
	\end{enumerate}
	Every strongly continuous contraction semigroup defines an unbounded operator $K:\dom K\to\H$ by setting $\dom K$ to be the set such that $K\psi := \lim_{t\downarrow 0} \tfrac \rmi t (T(t)\psi - \psi)$ exists \cite{EN01}.
	As is typical for unbounded operators $\dom K$ will usually be a proper subset of $\H$.
	In fact, $\dom K = \H$ if and only if the semigroup is uniformly continuous.
	Here, \emph{uniform} refers to the topology induced by the operator norm $\|A\|_\infty=\sup_{\|\ket{\psi}\|=1}\|A\ket{\psi}\|$.
	In finite dimensions, the uniform and the strong operator topology coincide, but in infinite dimensions this is wrong and the uniform topology is stronger.
	The operator $K$ is then called the \emph{generator} of the semigroup and (in principle) encodes all information necessary to compute $T(t)$.
	Self-adjoint (unbounded) operators $H$ on $\H$ are precisely the generators of unitary groups $U(t) = \rme^{-\rmi tH}$.
	We will prove a version of Main~Result~\ref{thm:mainthm} from the main text for strongly continuous contraction semigroups.
	The unitary case as considered in the main text then follows from that.
	See Thm.~\ref{thm:main} in Sec.~\hyperlink{app:proof}{C} of this Supplementary Material.
	
	Next, we recall the Trotter product formula.
	This discussion will motivate the use of state-dependent error bounds.
	For this purpose, let us focus on the finite-dimensional unitary case first: Let $\dim(\H)<\infty$ and let $H^{(1)}:\H\to\H$, $H^{(2)}:\H\to\H$ be self-adjoint, hence generators of two continuous unitary groups. Then the \emph{uniform Trotter error} satisfies the bound \cite{Childs2021,Suzuki1985}
	\begin{align}
		\Big\|\left(\rme^{-\rmi \frac{t}{n}H^{(1)}}\rme^{-\rmi \frac{t}{n}H^{(2)}}\right)^n-\rme^{-\rmi t(H^{(1)}+H^{(2)})}\Big\|_\infty
		\leq
		\frac{t^2}{2n}\big\|[H^{(1)},H^{(2)}]\big\|_\infty.\label{eq:uniform_error_bound}
	\end{align}
	Due to the explicit bound with $\mathcal{O}(1/N)$ scaling, we can conclude from Eq.~\eqref{eq:uniform_error_bound} that the finite-dimensional Trotter product formula \emph{always} converges uniformly, i.e.\ the uniform Trotter error goes to zero as $n\to\infty$.
	But what does Eq.~\eqref{eq:uniform_error_bound} imply in the situation where the Hamiltonians are finite-dimensional truncations of infinite-dimensional operators?
	Let $H^{(1)}:\D(H^{(1)})\to\H$ and $H^{(2)}:\D(H^{(2)})\to\H$ be two self-adjoint operators acting on a separable infinite-dimensional Hilbert space $\H$. We do not make any assumptions on the spectra of these two operators, in particular we allow for continuous spectra.
	Denote their finite-dimensional approximations as $H_d^{(i)} := P_d H^{(i)}P_d : V_d\to V_d$ with $i=1,2$.
	Obviously, Eq.~\eqref{eq:uniform_error_bound} holds for $H^{(1)}_n$ and $H^{(2)}_n$, so that
	\begin{align}
		\beta_d^{(n)}(t):=\Big\|\left(\rme^{-\rmi \frac{t}{n}H_d^{(1)}}\rme^{-\rmi \frac{t}{n}H_d^{(2)}}\right)^n-\rme^{-\rmi t(H_d^{(1)}+H_d^{(2)})}\Big\|_\infty
		\leq
		\frac{t^2}{2n}\big\|[H_d^{(1)},H_d^{(2)}]\big\|_\infty.\label{eq:uniform_error_bound_beta}
	\end{align}
	In this case, for a fixed total evolution time $t$ and a fixed number of Trotter steps $n$, the commutator error $\|[H_d^{(1)},H_d^{(2)}]\|_\infty$ will typically diverge with the dimension $d$ of truncation.
	This is due to the fact that the operator norm $\Vert\placeholder\Vert_\infty$ gives a worst-case error by taking the supremum over all normalised $\ket{\psi}\in V_d$.
	By increasing $d$, new vectors will be added so that the Trotter error in Eq.~\eqref{eq:uniform_error_bound_beta} becomes larger.
	Consequently, the Trotter approximants might fail to converge uniformly.
	In fact:
	\begin{prop}\label{thm:uniformTrotter}
		If the uniform Trotter error $\beta_d^{(n)}(t)$ is bounded independently of $d$ then the infinite-dimensional Trotter product converges uniformly.
		\begin{proof}
			A proof will be given in the next section.
		\end{proof}
	\end{prop}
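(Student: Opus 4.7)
The strategy is to turn the uniform-in-$d$ bound on the finite-dimensional Trotter error into a Cauchy estimate in operator norm for the infinite-dimensional Trotter approximants. Write $X(s)=\rme^{-\rmi s H^{(1)}}\rme^{-\rmi s H^{(2)}}$ and let $X_d(s)=\rme^{-\rmi s H_d^{(1)}}\rme^{-\rmi s H_d^{(2)}}$, viewed as a unitary on all of $\H$ which acts trivially on $V_d^\perp$. Denote by $B(n,t)$ the hypothesized bound, so that $\beta_d^{(n)}(t)\leq B(n,t)$ for every $d$, with $B(n,t)\to 0$ as $n\to\infty$ (this vanishing is implicit: $\beta_d^{(n)}(t)$ is controlled by $\tfrac{t^2}{2n}\|[H_d^{(1)},H_d^{(2)}]\|_\infty$ via Eq.~\eqref{eq:uniform_error_bound_beta}, and a $d$-uniform bound on the left necessarily forces a $\mathcal{O}(1/n)$ dependence).

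The plan unfolds in four steps. First, rewrite the hypothesis as $\|X_d(t/n)^n-\rme^{-\rmi tH_d}\|_\infty\leq B(n,t)$ and apply the triangle inequality twice to obtain, for all $n,m,d$,
\begin{equation*}
\|X_d(t/n)^n - X_d(t/m)^m\|_\infty \leq B(n,t)+B(m,t).
\end{equation*}
Second, I would establish the strong convergence $X_d(s)\psi\to X(s)\psi$ for every $\psi\in\H$ and fixed $s$, which propagates via a standard telescope argument to the $n$-fold products, giving $X_d(t/n)^n\psi\to X(t/n)^n\psi$ for each fixed $n$. Third, recalling that operator-norm bounds are inherited by strong limits (if $\|A_d\|_\infty\leq M$ and $A_d\to A$ strongly, then $\|A\|_\infty\leq M$), let $d\to\infty$ in the displayed inequality to conclude
\begin{equation*}
\|X(t/n)^n - X(t/m)^m\|_\infty \leq B(n,t)+B(m,t) \xrightarrow{n,m\to\infty} 0.
\end{equation*}
Fourth, by completeness of the bounded operators in operator norm, $X(t/n)^n$ converges uniformly as $n\to\infty$, and the limit is automatically unitary as a norm limit of unitaries.

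The main obstacle is the second step, namely the strong convergence $X_d(s)\to X(s)$ of the truncated Trotter step to the full one. By the Trotter--Kato theorem, this follows from strong resolvent convergence $H_d^{(i)}\to H^{(i)}$, for which the natural sufficient condition is that $\V=\bigcup_d V_d$ be a common core of both $H^{(1)}$ and $H^{(2)}$---precisely hypothesis~(i) of Main~Result~\ref{thm:mainthm}. This assumption is not spelled out in the proposition, but is the natural framework in which the $H_d^{(i)}$ can genuinely be said to approximate the $H^{(i)}$; under it one has $H_d^{(i)}\psi = P_d H^{(i)}\psi\to H^{(i)}\psi$ for every $\psi\in\V$, and the Trotter--Kato machinery then delivers the required strong convergence of the unitary groups, closing the argument. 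Without such a core condition, the truncated dynamics $X_d(s)$ need not approach $X(s)$ in any meaningful sense, so some such hypothesis is genuinely unavoidable.
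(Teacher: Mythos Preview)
Your argument is correct and takes a different route from the paper. The paper fixes $\psi\in V_d$ and compares $X(t/n)^n\psi$ directly with the target $U(t)\psi$: for $\tilde d>d$ it splits via $P_{\tilde d}$ to obtain $\norm{(X(t/n)^n-U(t))\psi}\le c_n\norm\psi+\norm{(\1-P_{\tilde d})(X(t/n)^n-U(t))\psi}$, lets $\tilde d\to\infty$, and then extends from $\V$ to $\H$ by density. You instead bypass the target entirely, turning the hypothesis into a Cauchy-in-$n$ estimate at the truncated level and transporting it to the full dynamics through the strong convergence $X_d(t/n)^n\to X(t/n)^n$ together with lower semicontinuity of the operator norm under strong limits. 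Your route has the advantage of not needing the limit $U(t)$ to be defined in advance; the paper's route has the advantage of identifying the limit explicitly as $U(t)$. Both arguments rest on the common-core assumption you flag, which the paper also imposes at the outset of its proof.

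One point to fix: your parenthetical derivation of $B(n,t)\to0$ from Eq.~\eqref{eq:uniform_error_bound_beta} does not work. The commutator expression is an \emph{upper} bound on $\beta_d^{(n)}(t)$, so a $d$-uniform bound on the latter says nothing about the former (and in any case $\beta_d^{(n)}(t)\le2$ holds trivially for all $d,n$, so a mere uniform bound is vacuous). The paper simply reads the hypothesis as the existence of $c_n\to0$ with $\beta_d^{(n)}(t)\le c_n$ for every $d$; you should take that as the assumption rather than attempt to derive the $n$-decay.
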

	This will, of course, not hold for arbitrary Hamiltonians.
	For an example, see \cite[Sec.~3]{Ichinose2004}.
	Indeed, this can also be observed numerically:
	As an example, we study the operators $H^{(1)}=\frac{1}{2}\left(Q^2+P^2\right)$ and $H^{(2)}=\frac{1}{2}\left(QP+PQ\right)$ with $Q$ denoting the position operator and $P$ the momentum operator.
	Note, that $H^{(1)}$ is the Hamiltonian of the quantum harmonic oscillator and that $H^{(2)}$ generates squeezing transformations.
	As can be seen from Fig.~\ref{fig:Uniform}, the Trotter error $\beta_d^{(n)}(t)$ increases with the truncation dimension $d$ for this example and eventually reaches $2$, which is the maximum of the operator norm distance of two unitaries.
	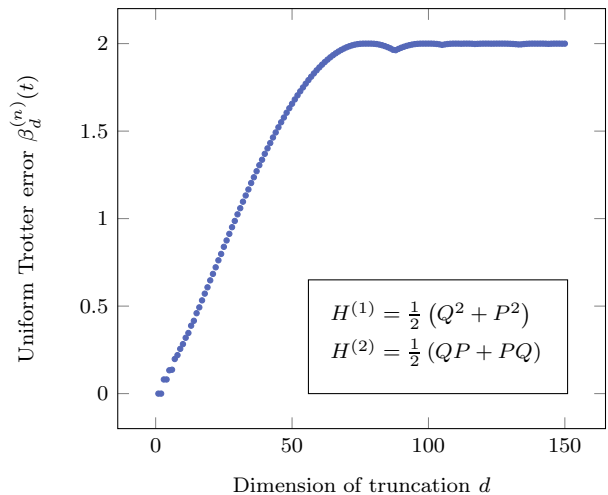
\begin{figure}
		\begin{tikzpicture}[mark size={1}]
			\pgfplotsset{%
				width=.45\textwidth,
				height=.4\textwidth
			}
			\begin{axis}[
				xlabel={Dimension of truncation $d$},
				ylabel={Uniform Trotter error $\beta_d^{(n)}(t)$},
				x post scale=1,
				y post scale=1,
				legend style={cells={align=left}},
				legend columns = 1,
				legend pos = south east,
				label style={font=\footnotesize},
				tick label style={font=\footnotesize},
				]
				\addplot[color=1, only marks] table[x=n, y=norm, col sep=comma]{Uniform.csv};
				\draw (55,0) rectangle (150,65);
				\node at (100,45) (A1) {\footnotesize$H^{(1)}=\frac{1}{2}\left(Q^2+P^2\right)$};
				\node[below=0.5cm of A1.west, anchor=west] (A2) {\footnotesize$H^{(2)}=\frac{1}{2}\left(QP+PQ\right)$};
			\end{axis}
		\end{tikzpicture}
		\caption{
			Uniform Trotter error for $H^{(1)}=\frac{1}{2}\left(Q^2+P^2\right)$ and $H^{(2)}=\frac{1}{2}\left(QP+PQ\right)$ with $Q$ the position and $P$ the momentum operator. Parameters: $n=20$ Trotter steps, total evolution time of $t=2$. The error reaches the maximal norm distance of $2$ showing that the Trotter formula does not converge in norm.
		}
		\label{fig:Uniform}
	\end{figure}
	\noindent
	For this reason, we cannot expect uniform convergence in the infinite-dimensional setting.
	Rather, we need a fine-tuned error measure.
	This can be achieved by considering the state-dependent, or \emph{strong}, Trotter error
	\begin{equation}
		b_d^{(n)}(\ket{\psi}; t):=\Big\|\left(\rme^{\rmi \frac{t}{n}H_d^{(1)}}\rme^{\rmi \frac{t}{n}H_d^{(2)}}\right)^n\ket{\psi}-\rme^{\rmi t(H_d^{(1)}+H_d^{(2)})}\ket{\psi}\Big\|\label{eq:strong_Trotter_error_appendix}
	\end{equation}
	instead.
	If $b_d^{(n)}(\ket{\psi}; t)\to 0$ for $n\to\infty$ for all $\ket\psi$, we say that the Trotter product formula \emph{converges strongly}.
	See \cite[Sec.~VI.1]{ReedSimon1981} for details regarding strong convergence.
	In the main text, we used Eq.~\eqref{eq:strong_Trotter_error_appendix} to define
	\begin{equation}
		b^{(n)}(\ket{\psi} ; t) := \limsup_{d\to\infty}  b_d^{(n)}(\ket{\psi};t).\label{eq:Trotter_error_infinite}
	\end{equation}
	
	Lastly, consider a linear operator $H:\D(H)\to\H$ acting on an infinite-dimensional separable Hilbert space $\H$.
	Then the graph norm $\|\placeholder\|_H$ of $H$ is defined for $\ket{\psi}\in\D(H)$ by
	\begin{equation}
		\|\ket{\psi}\|_H^2=\|\ket{\psi}\|^2+\|H\ket{\psi}\|^2.\label{eq:GrapH_dorm}
	\end{equation}
	$H$ is called \emph{closed} if $\dom H$ is complete with respect to the graph norm $\norm\placeholder_H$.
	If $H_1$ is another operator, then we write $H\subset H_1$ and call $H_1$ an \emph{extension} of $H$ if $\dom H \subset\dom{H_1}$ and if $H_1\ket\psi = H\ket\psi$ for $\ket\psi\in\dom H$.
	$H$ is called \emph{densely defined} if $\dom H$ is dense in $\H$.
	In that case, its adjoint $H^*$ is a well-defined closed operator.
	If $H$ is \emph{symmetric}, i.e. $\ip\psi{H\varphi} = \ip{H\psi}\varphi$, then $H^{**}$ is the smallest closed extension of $H$, called the \emph{closure}.
	
	We say that a subspace $\W\subset\dom H$ is a \emph{core} for $H$ if $\W$ is dense in $\D(H)$ with respect to the graph norm of $H$.

	\hypertarget{app:proof_bound}{}
	\section{B. Proof of the first main result}
	
	This section demonstrates the proof of our Main~Result~\ref{thm:finite_bounds} from the main text. Throughout, we will assume that $\mathcal{{H}}$ is a finite-dimensional Hilbert space. We then consider two Hamiltonians on $\H$, i.e.\ self-adjoint operators $H^{(1)}:\H\to\H$ and $H^{(2)}:\H\to\H$, whose Trotter product is examinated. Denote
	\begin{equation}
		U(t)=\rme^{-\rmi(H^{(1)}+H^{(2)})t}\label{eq:W(t)}
	\end{equation}
	and
	\begin{equation}
		W^{(n)}(t)=\left(\rme^{-\rmi H^{(1)}\frac{t}{n}}\rme^{-\rmi H^{(2)}\frac{t}{n}}\right)^{n}.\label{eq:WN(t)}
	\end{equation}
	We now recapitulate \cite[Lemma~1]{Burgarth2022}, in particular Eq.~(2.4) therein. This will be needed in order to prove our Main~Result~\ref{thm:finite_bounds}.
	
	\begin{lem}\label{lemma:action}
		Let $\tilde{H}_1(t)$, $\tilde{H}_2(t)$ be two families of time-dependent, self-adjoint and locally integrable operators. Define the corresponding unitaries they generate
		\begin{equation}
			\tilde{U}_j(t)=\T\exp\left(-\rmi\int_{0}^{t} \tilde{H}_j(s)\mathrm{d}s\right),\quad j=1,2,
		\end{equation}
		where $\T$ denotes time-ordering. Furthermore, define the integral action $S_{21}(t)$ as
		\begin{equation}
			S_{21}(t)=\int_{0}^{t}\left[\tilde{H}_2(s)-\tilde{H}_1(s)\right]\mathrm{d}s.
		\end{equation}
		Then for all $t\in\RR_{\geq0}$,
		\begin{equation}
			\tilde{U}_2(t)-\tilde{U}_1(t)=-\rmi S_{21}(t)\tilde{U}_2(t)-\int_{0}^{t} \tilde{U}_1(t)\tilde{U}_1(s)^\dagger\left[\tilde{H}_1(s)S_{21}(s)-S_{21}(s)\tilde{H}_2(s)\right]\tilde{U}_2(s)\mathrm{d}s.
		\end{equation}
		\begin{proof}
			This is proved in \cite[Lemma~1]{Burgarth2022}.
		\end{proof}
	\end{lem}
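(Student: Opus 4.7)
The plan is to establish the identity by first deriving a simpler integral representation of $\tilde U_2(t)-\tilde U_1(t)$ and then performing an integration by parts that introduces the integral action $S_{21}$. To get the starting identity, I would differentiate the product $\tilde U_1(s)^\dagger\tilde U_2(s)$ with respect to $s$, using $\rmi\partial_s \tilde U_j(s) = \tilde H_j(s)\tilde U_j(s)$ and the corresponding adjoint identity $\rmi\partial_s \tilde U_j(s)^\dagger = -\tilde U_j(s)^\dagger \tilde H_j(s)$. This gives
\begin{equation*}
\tfrac{d}{ds}\bigl[\tilde U_1(s)^\dagger\tilde U_2(s)\bigr] = -\rmi\,\tilde U_1(s)^\dagger\bigl[\tilde H_2(s)-\tilde H_1(s)\bigr]\tilde U_2(s).
\end{equation*}
Integrating from $0$ to $t$, using $\tilde U_j(0)=I$, and left-multiplying by $\tilde U_1(t)$, I obtain the preliminary identity
\begin{equation*}
\tilde U_2(t)-\tilde U_1(t)=-\rmi\int_0^t \tilde U_1(t)\tilde U_1(s)^\dagger\bigl[\tilde H_2(s)-\tilde H_1(s)\bigr]\tilde U_2(s)\,\mathrm{d}s.
\end{equation*}

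Next, I would recognise $\tilde H_2(s)-\tilde H_1(s)=S_{21}'(s)$ and integrate by parts. With the shorthand $A(s)=\tilde U_1(t)\tilde U_1(s)^\dagger$ and $B(s)=\tilde U_2(s)$, the product rule gives $A'(s)=\rmi A(s)\tilde H_1(s)$ and $B'(s)=-\rmi\tilde H_2(s)B(s)$, so that
\begin{equation*}
\tfrac{d}{ds}\bigl[A(s)S_{21}(s)B(s)\bigr]=A(s)S_{21}'(s)B(s)+\rmi A(s)\bigl[\tilde H_1(s)S_{21}(s)-S_{21}(s)\tilde H_2(s)\bigr]B(s).
\end{equation*}
Integrating this identity from $0$ to $t$ and noting that the boundary term at $s=0$ vanishes because $S_{21}(0)=0$ while the boundary term at $s=t$ reduces to $S_{21}(t)\tilde U_2(t)$ because $A(t)=\tilde U_1(t)\tilde U_1(t)^\dagger=I$, I can solve for $\int_0^t A(s)S_{21}'(s)B(s)\,\mathrm{d}s$ and multiply by $-\rmi$. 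The two factors of $\rmi$ combine to a minus sign in front of the commutator-like integrand, yielding exactly the claimed formula after re-expanding $A$ and $B$.

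The work is essentially a bookkeeping exercise in signs and $\rmi$-factors; the main pitfall is simply to keep the ordering of the operator-valued factors correct, since $\tilde H_1(s)$ and $S_{21}(s)$ need not commute (and likewise for $S_{21}(s)$ and $\tilde H_2(s)$). Regularity is not a serious obstacle: local integrability of $\tilde H_1,\tilde H_2$ ensures that $S_{21}$ is absolutely continuous and that the time-ordered propagators $\tilde U_j$ are strongly continuously differentiable in the appropriate sense, so the integration by parts is fully justified in the strong operator topology. I would present the proof as a single clean display for the product-rule step followed by the substitution back into the preliminary identity, which makes the structure of the argument transparent and the result immediate.
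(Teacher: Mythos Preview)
Your argument is correct: the Duhamel-type identity for $\tilde U_2(t)-\tilde U_1(t)$ followed by integration by parts with $S_{21}'(s)=\tilde H_2(s)-\tilde H_1(s)$ reproduces the claimed formula, and your sign and ordering bookkeeping checks out line by line. The paper does not actually give a proof here but simply cites \cite[Lemma~1]{Burgarth2022}; your derivation is precisely the standard one behind that reference, so there is nothing to compare beyond noting that you have supplied the details the paper omits.
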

	
	Let us restate our Main~Result~\ref{thm:finite_bounds} in a slightly generalized version.
	
	\begin{main_one}
		Let $\H$ be a finite-dimensional Hilbert space and let  $H^{(1)}:\H\to\H$ and $H^{(2)}:\H\to\H$ be self-adjoint operators on $\H$. Define $U(t)$ and $W^{(n)}(t)$ as in Eq.~\eqref{eq:W(t)} and Eq.~\eqref{eq:WN(t)}, respectively. If $\ket{\varphi}$ is an eigenvector of the self-adjoint operator $H^{(1)}+H^{(2)}$ accoring to the eigenvalue equation $(H^{(1)}+H^{(2)})\ket{\varphi}=h\ket{\varphi}$, then we have for all $t\in\RR_{\geq0}$ and $n\in\NN$,
		\begin{equation}
			\left\Vert \left(U(t)-W^{(n)}(t)\right)\ket{\varphi}\right\Vert \leq \frac{t^2}{2n}\inf_{\alpha\in\RR}\left(\Big\Vert \Big[H^{(1)}-\alpha h\Big]^2\ket{\varphi}\Big\Vert + \Big\Vert \Big[H^{(2)}-(1-\alpha) h\Big]^2\ket{\varphi}\Big\Vert\right).\label{eq:second_main_generalized}
		\end{equation}
		
		\begin{proof}
			We start by applying Lemma~\ref{lemma:action}. For this, take $\tilde{U}_{1}(2t)=W^{(n)}(t)$ and $\tilde{U}_2(2t)=U(t)$. Notice that $\tilde{U}_1$ is generated by a piece-wise constant Hamiltonian
			$\tilde{H}_{1}(s)$ over the time interval $[0,2t].$ This Hamiltonian is
			defined by
			\begin{equation}
				\tilde{H}_{1}(s)\equiv\begin{cases}
					H^{(2)}, \quad & t\in\left[0,\frac{t}{n}\right)\\
					H^{(1)}, \quad & t\in\left[\frac{t}{n},\frac{2t}{n}\right)
				\end{cases}
			\end{equation}
			and then periodically extended by $\tilde{H}_{1}(s+\frac{2t}{n})=\tilde{H}_{1}(s)$. The unitary $\tilde{U}_2$ is simply generated by the average constant Hamiltonian $\tilde{H}_{2}(s)=\frac{H^{(1)}+H^{(2)}}{2}$. Since
			both Hamiltonians are locally integrable, Lemma~\ref{lemma:action} can be applied. We
			want to estimate the error on an eigenvector $\left(H^{(1)}+H^{(2)}\right)\ket{\varphi}=h\ket{\varphi}:$
			\begin{equation}
				\left(U(t)-W^{(n)}(t)\right)\ket{\varphi}=-\rmi\rme^{-\rmi ht}S_{21}(2t)\ket{\varphi}-\int_{0}^{2t}\rme^{-\rmi \frac{h}{2}s}\tilde{U}_{1}(2t)\tilde{U}_{1}^{\dagger}(s)\left[\tilde{H}_{1}(s)-\frac{h}{2}\right]S_{21}(s)\ket{\varphi}\mathrm{d}s.
			\end{equation}
			The action reads
			\begin{align}
				S_{21}(s) & =\int_{0}^{s}\left(\frac{H^{(1)}+H^{(2)}}{2}-\tilde{H}_{1}(\tau)\right)\mathrm{d}\tau\nonumber\\
				& =\int_{0}^{\left\lfloor s/\frac{2t}{n}\right\rfloor \frac{2t}{n}}\left(\frac{H^{(1)}+H^{(2)}}{2}-\tilde{H}_{1}(\tau)\right)\mathrm{d}\tau+\int_{\left\lfloor s/\frac{2t}{n}\right\rfloor \frac{2t}{n}}^{s}\left(\frac{H^{(1)}+H^{(2)}}{2}-\tilde{H}_{1}(\tau)\right)\mathrm{d}\tau\nonumber\\
				& =\int_{\left\lfloor s/\frac{2t}{n}\right\rfloor \frac{2t}{n}}^{s}\left(\frac{H^{(1)}+H^{(2)}}{2}-\tilde{H}_{1}(\tau)\right)\mathrm{d}\tau\nonumber\\
				& =\int_{0}^{\left\{ s/\frac{2t}{n}\right\} }\left(\frac{H^{(1)}+H^{(2)}}{2}-\tilde{H}_{1}(\tau)\right)\mathrm{d}\tau,
			\end{align}
			where we split the integral into full periods and a remainder and
			used the periodicity of $\tilde{H}_{1}(s).$ In particular, $S_{21}(2t)=0$
			and we can bound
			\begin{equation}
				\left\Vert \left(U(t)-W^{(n)}(t)\right)\ket{\varphi}\right\Vert  \le\int_{0}^{2t}\left\Vert \left[\tilde{H}_{1}(s)-\frac{h}{2}\right]S_{21}(s)\ket{\varphi}\right\Vert\mathrm{d}s.
			\end{equation}
			For $s\in\left[0,\frac{t}{n}\right)$, we have
			\begin{equation}
				S_{21}(s)=\int_{0}^{s}\left(\frac{H^{(1)}+H^{(2)}}{2}-H^{(2)}\right)\mathrm{d}\tau=s\frac{H^{(1)}-H^{(2)}}{2}.
			\end{equation}
			Analogously for $s\in\left[\frac{t}{n},\frac{2t}{n}\right)$, we obtain
			\begin{equation}
				S_{21}(s)=\frac{t}{n}\frac{H^{(1)}-H^{(2)}}{2}+\int_{\frac{t}{n}}^{s}\left(\frac{H^{(1)}+H^{(2)}}{2}-H^{(1)}\right)\mathrm{d}\tau=\left(\frac{2t}{n}-s\right)\frac{H^{(1)}-H^{(2)}}{2}.
			\end{equation}
			Therefore,
			\begin{align}
				\left\Vert \left(U(t)-W^{(n)}(t)\right)\ket{\varphi}\right\Vert 
				\leq&
				n\int_0^{\frac{t}{n}} \frac{s}{2}\left\Vert \left[H^{(2)}-\frac{h}{2}\right]\left(H^{(1)}-H^{(2)}\right)\ket{\varphi}\right\Vert\mathrm{d}s\nonumber\\
				&+n\int_{\frac{t}{n}}^{\frac{2t}{n}} \left(\frac{t}{n}-\frac{s}{2}\right)\left\Vert \left[H^{(1)}-\frac{h}{2}\right]\left(H^{(1)}-H^{(2)}\right)\ket{\varphi}\right\Vert\mathrm{d}s\nonumber\\
				=&\frac{t^2}{4n}\left(\left\Vert \left[H^{(2)}-\frac{h}{2}\right]\left(H^{(1)}-H^{(2)}\right)\ket{\varphi}\right\Vert+\left\Vert \left[H^{(1)}-\frac{h}{2}\right]\left(H^{(1)}-H^{(2)}\right)\ket{\varphi}\right\Vert\right),
			\end{align}
			where we used the periodicity of $\tilde{H}_{1}(s)$ and of $S_{21}(s)$ in the first step.
			We expand
			\begin{align}
				\left\Vert\left[H^{(2)}-\frac{h}{2}\right]\left(H^{(1)}-H^{(2)}\right)\ket{\varphi}\right\Vert
				& = \left\Vert\left[H^{(2)}-\frac{h}{2}\right]\left(H^{(1)}+H^{(2)}-2H^{(2)}\right)\ket{\varphi}\right\Vert\nonumber\\
				&=\left\Vert\left[H^{(2)}-\frac{h}{2}\right]\left(h-2H^{(2)}\right)\ket{\varphi}\right\Vert\nonumber\\
				&=2\left\Vert\left[H^{(2)}-\frac{h}{2}\right]^2\ket{\varphi}\right\Vert
			\end{align}
			and simliar for the second term
			\begin{equation}
				\left\Vert\left[H^{(1)}-\frac{h}{2}\right]\left(H^{(1)}-H^{(2)}\right)\ket{\varphi}\right\Vert
				=2\left\Vert\left[H^{(1)}-\frac{h}{2}\right]^2\ket{\varphi}\right\Vert.
			\end{equation}
			Therefore, in total we receive
			\begin{equation}
				\left\Vert \left(U(t)-W^{(n)}(t)\right)\ket{\varphi}\right\Vert \leq \frac{t^2}{2n}\left(\Big\Vert \Big[H^{(1)}-\frac{h}{2}\Big]^2\ket{\varphi}\Big\Vert + \Big\Vert \Big[H^{(2)}-\frac{h}{2}\Big]^2\ket{\varphi}\Big\Vert\right),\label{eq:second_main_half}
			\end{equation}
			which is the bound from our Main~Result~\ref{thm:finite_bounds} in the main text. In order to show the slightly generalized version of this bound stated in Eq.~\eqref{eq:second_main_generalized}, notice that we can always rescale the Hamiltonians $H^{(1)}$ and $H^{(2)}$ by a multiple of the identity $\1$ without changing the Trotter error. That is, by rescaling $H^{(1)}\rightarrow H^{(1)}+a\1$ and $H^{(2)}\rightarrow H^{(2)}+b\1$ we will not change $\Vert \left(U(t)-W^{(n)}(t)\right)\ket{\varphi}\Vert$ since $\1$ commutes with any Hamiltonian. This procedure leads to a rescaling $h\rightarrow h+a+b$ and Eq.~\eqref{eq:second_main_half} becomes
			\begin{equation}
				\left\Vert \left(U(t)-W^{(n)}(t)\right)\ket{\varphi}\right\Vert \leq \frac{t^2}{2n}\left(\Big\Vert \Big[H^{(1)}-\frac{h+b-a}{2}\Big]^2\ket{\varphi}\Big\Vert + \Big\Vert \Big[H^{(2)}-\frac{h+a-b}{2}\Big]^2\ket{\varphi}\Big\Vert\right),
			\end{equation}
			which is true for any $a,b\in\RR$. The assertion then follows by defining $\alpha=(h+b-a)/(2h)$ and taking the infimum over $\alpha$. The latter is possible since the inequality holds for all $\alpha\in\RR$.
		\end{proof}
	\end{main_one}

	\hypertarget{app:proof}{}
	\section{C. Proof of the second main result}
	
	In this section, we present the proof of Main~Result~\ref{thm:mainthm} from the main text. For this, we will use the notation introduced in Sec.~\hyperlink{app:preliminaries}{A} of this Supplementary Material. We will first prove a more general result, which holds for strongly continuous contraction semigroups. This is given by Thm.~\ref{thm:main}. Our Main~Result~\ref{thm:mainthm}
	will then follow as a special case. In the proofs of the theorems, we will use the following consequence of the \emph{Trotter-Kato approximation theorem} \cite[Sec.~4]{EN01}:
	
	\begin{thm}\label{thm:TKAT}
		For each $n\in\NN$, let $T_d(t)$ be a contraction semigroup on $V_d$.
		Let $H_d$ be the sequence of generators.
		The following statements are equivalent
		\begin{enumerate}[(a)]
			\item \label{it:TKgroup}
			The strong limit $T(t) = \slim_{d\to\infty} T_d(t)$ exists for all times $t>0$ and defines a strongly continuous contraction semigroup.
			\item \label{it:TKgen}
			The operator $H : \dom H \to \H$, defined by
			\begin{align*}
				\ket{\psi} \in \D(H) \iff &\text{there exists a sequence } (\ket{\psi_d})_{d\in\NN} \in \CCC, \text{ such that } \ket{\psi_d} \to \ket{\psi}\in\H, \\
				&\text{and } H\ket{\psi} :=  \lim_{d\to\infty} H_d\ket{\psi_d} \text{ exists},
			\end{align*}
			is actually well-defined in the sense that $H\ket{\psi}$ does not depend on the chosen sequence $(\ket{\psi_d})$. Additionally, $H$ generates a contraction semigroup.
		\end{enumerate}
		Furthermore, if these equivalent statements hold then $H$ coincides with the generator of $T(t)$.
	\end{thm}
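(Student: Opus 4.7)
The plan is to reduce Thm.~\ref{thm:TKAT} to the classical Trotter-Kato approximation theorem \cite[Thm.~4.8, Thm.~4.9]{EN01} by lifting the finite-dimensional semigroups to the fixed ambient Hilbert space $\H$. I would introduce the extended operators
\begin{equation*}
    \tilde T_d(t) := T_d(t) P_d + (\1 - P_d), \qquad \tilde H_d := H_d P_d,
\end{equation*}
where $H_d$ is regarded as a bounded operator on $V_d$ extended by zero on $V_d^\perp$. A direct calculation, using that $T_d(s)$ preserves $V_d$ (so that $P_d T_d(s) P_d = T_d(s) P_d$ and $(\1 - P_d) T_d(s) P_d = 0$), shows that $\tilde T_d$ is a uniformly continuous contraction semigroup on $\H$ whose bounded generator is exactly $\tilde H_d$.

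Next, I would translate both clauses of the theorem into statements about these lifts. Since $\V$ is dense in $\H$, the projections $P_d$ converge strongly to $\1$, so clause (a) is equivalent to $\tilde T_d(t) \to T(t)$ strongly on all of $\H$. For clause (b), any sequence $(\psi_d)$ with $\psi_d \in V_d$ satisfies $\tilde H_d \psi_d = H_d \psi_d$, so the graph limit of $\tilde H_d$ along such sequences coincides with the operator $H$ defined in (b); conversely, any sequence $\psi_d \to \psi$ in $\H$ along which $\tilde H_d \psi_d$ converges can be replaced by $(P_d \psi_d) \subset V_d$ without changing either limit, since $\tilde H_d (\1 - P_d) = 0$ and $P_d \to \1$ strongly. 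Hence the well-definedness and semigroup-generation conditions in (b) are exactly the graph-limit hypotheses required by the classical theorem.

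Finally, I would invoke the classical Trotter-Kato theorem for the sequence of strongly continuous contraction semigroups $(\tilde T_d(t))$ on $\H$: strong convergence to a strongly continuous contraction semigroup $T(t)$ is equivalent to the graph limit of the generators being single-valued and itself generating a contraction semigroup, and in that case the two generators coincide. Combined with the dictionary above, this yields (a) $\Leftrightarrow$ (b) together with the identification of $H$ with the generator of $T(t)$. The main subtlety will lie in the dictionary step: one must verify carefully that the domain of the graph limit of the $\tilde H_d$, defined via arbitrary approximating sequences in $\H$, collapses to the one defined via sequences in $V_d$, and that the well-definedness clause transfers faithfully between the two formulations. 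Once this identification is in place, the remainder is a direct appeal to the standard theorem.
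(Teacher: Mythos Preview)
Your reduction strategy is sound, and the extension $\tilde T_d(t) = T_d(t)P_d + (\1-P_d)$ is exactly the lift the paper uses in its proof. Your dictionary argument---that the graph limit of $\tilde H_d$ along arbitrary sequences in $\H$ collapses to the one along sequences in $V_d$ because $\tilde H_d(\1-P_d)=0$ and $P_d\to\1$ strongly---is correct and carefully stated.

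The difference lies in what happens after the lift. The paper does \emph{not} invoke a ready-made graph-limit version of Trotter--Kato; instead it works through resolvents. For \ref{it:TKgroup}$\Rightarrow$\ref{it:TKgen} it uses \cite[Thm.~4.9]{EN01} to obtain strong convergence of the resolvents $R(H_d;z)$ with dense range, then introduces the sequence spaces $\CCC$ and $\DDD=\RRR(z)\CCC$ and shows directly that $\lim_d H_d\ket{\phi_d}=A(\lim_d\ket{\phi_d})$ on $\DDD$, which both yields well-definedness of $H$ and identifies it with the generator $A$ of $T(t)$. For \ref{it:TKgen}$\Rightarrow$\ref{it:TKgroup} it verifies strong convergence of the resolvents by hand and then appeals to \cite[Thm.~4.9]{EN01} again. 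So the paper is essentially \emph{deriving} the graph-limit equivalence from the resolvent equivalence, whereas you would cite it as a black box.

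Your route is shorter, but be aware that the specific statements \cite[Thm.~4.8, Thm.~4.9]{EN01} are formulated via cores and resolvents, not via graph limits; the version you invoke (``strong convergence of semigroups is equivalent to the graph limit being single-valued and a generator'') is due to Kurtz and appears, e.g., in Ethier--Kurtz, \emph{Markov Processes}, Ch.~1, Thm.~6.1. If you retain your approach, cite that result rather than \cite{EN01}; otherwise you are implicitly assuming the content of the theorem you are asked to prove.
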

	
	For the proof, we extend $T_d(t)$ to act as the identity on the orthogonal complement $V_d^\perp$ of $V_d$.
	This turns $T_d(t)$ into a strongly continuous contraction semigroup on $\H$, which would be wrong for the extension by $0$.
	Notice that this does not affect strong convergence: The difference of the extension by zero and the extension by the identity is the projection $P_d^\perp$ onto $V_d^\perp$ which vanishes strongly as $d\to\infty$.
	The proof is inspired by \cite[Thm.~2.3]{Duffield1992}:
	
	\begin{proof}
		
		For the proof, choose $z\in\mathbb{C}$ with $\Im z>0$.\\
		
		\ref{it:TKgroup} $\Rightarrow$ \ref{it:TKgen}:
		Our setting is a special case of the \emph{second Trotter-Kato approximation theorem} \cite[Thm.~4.9]{EN01}. Therefore, \ref{it:TKgroup} is equivalent to the statement that the resolvents $R(H_d;z)=(z-H_d)^{-1}$ are strongly convergent with $\slim_{d\to\infty} R(H_d;z)$ having dense range. For a sequence $(\ket{\psi_d})_d$ with $\ket{\psi_d}\in V_d$, we write $(\ket{\psi_d})_d\in\CCC$ if it converges in $\H$, i.e.,
		\begin{equation}
			\CCC = \Set*{(\ket{\psi_d})_d\subset \H \given \ket{\psi_d}\in V_d, \,\exists\,\ket{\psi} \in\H: \lim_{d\to\infty}\ket{\psi_d} = \ket{\psi}}.
		\end{equation}
		Furthermore, we write
		\begin{equation}
			\DDD := \Set[\big]{(\ket{\psi_d})_d\in\CCC \given (H_d\ket{\psi_d})_d\in\CCC}.
		\end{equation}
		Consider the resolvent operator $\RRR(z)$ which maps a sequence $(\ket{\psi_d})_d \subset\H$ to the sequence $(R(H_d;z)\ket{\psi_d})_d$.
		On one hand, by definition of $\DDD$, it holds that $((z-H_d)\ket{\psi_d})_d$ is a sequence in $\CCC$ if $(\ket{\psi_d})_d\in\DDD$, so
		\begin{equation}
			\DDD\subset \RRR(z)\CCC = \Set*{\rb*{R(H_d; z)\ket{\psi_d}}_d\given (\ket{\psi_d})_d\in\CCC}.
		\end{equation}
		On the other hand, since the $R(H_d; z)$ are uniformly bounded \cite[Thm.~II.1.10]{EN01} and strongly convergent in $d$, we have that $\rb*{R(H_d; z)\ket{\psi_d}}_d\in\CCC$ if $(\ket{\psi_d})_d\in\CCC$.
		But then $(\ket{\psi_d})_d\in\DDD$ due to $(z-H_d)R(H_d; z) = \1$ or, equivalently, $H_d R(H_d; z) = zR(H_d; z) - \1$.
		This shows
		\begin{equation}
			\DDD= \RRR(z)\CCC.
		\end{equation}
		For now, we denote the generator of the strong limit $T(t)$ by $A$ to avoid confusion with the operator $H$ defined in \ref{it:TKgen}.  Of course, we will show later that the two coincide.
		Note that by the Trotter-Kato Theorem, we have $\lim_d R(H_d;z) = R(A;z)$.
		For $(\ket{\psi_d})_d\in\CCC$ with $\lim_d \ket{\psi_d} = \ket{\psi}$, we have
		\begin{align}
			\lim_{d\to\infty} H_dR(H_d; z) \ket{\psi_d} &= \lim_{d\to\infty} \rb[\big]{z R(H_d; z)\ket{\psi_d} - \ket{\psi_d} } \nonumber\\
			&= z R(A; z)\ket{\psi}  - \ket{\psi} \nonumber\\
			&= A R(A; z)\ket{\psi}.
		\end{align}
		Since $\DDD=\RRR(z)\CCC$, this proves that $\lim_d H_d \ket{\phi_d} =A\rb[\big]{\lim_d \ket{\phi_d}}$ for all $(\ket{\phi_d})_d\in\DDD$. In particular, this shows the well-definedness of $H$ and that $A$ is an extension of $H$.
		Finally, we prove $H=A$ via
		\begin{equation}
			\DD(H) =
			\Set*{\lim_{d\to\infty}  \ket{\psi_d}\given (\ket{\psi_d})_d\in\DDD}
			=\Set*{\lim_d R(H_d;z) \ket{\phi_d} \given (\ket{\phi_d})_d \in\CCC}= R(A; z)\H = \D(A),
		\end{equation}
		where the third equality uses that $\lim_d R(H_d;z) \ket{\phi_d} = R(A;z) (\lim_d \ket{\phi_d})$ and that any $\ket\psi\in\H$ is the limit of a sequence in $\CCC$.
		
		\ref{it:TKgen} $\Rightarrow$ \ref{it:TKgroup}: Due to uniform boundedness, we only have to show strong convergence on the  dense subset $(z-H)\dom H \subset\H$.
		Let $\ket{\phi} \in \dom H$ and put $\ket{\psi} = (z-H)\ket{\phi} \in \H$. We know that $\ket{\psi} = \lim_{d\to\infty} (z-H_d) \ket{\phi}$.
		This implies that
		\begin{align}
			\|R(H_d;z)\ket{\psi} -  & R(H;z)\ket{\psi}\|\nonumber\\
			&\leq \norm{R(H_d;z)(\ket{\psi}-(z-H_d)\ket{\phi})} +\norm{R(H_d;z)(z-H_d)\ket{\phi} -\ket{\phi}}
		\end{align}
		goes to zero as $d\to\infty$: The first summand on the right-hand side is dominated by $\abs{z}^{-1} \norm{\ket{\psi}-(z-H_d)\ket{\phi}}\to0$. For the second summand, $R(H_d;z) (z-H_d)\ket{\phi} = \ket{\phi}$ for sufficiently large $d$. This shows that the resolvents $R(H_d;z)$ are strongly convergent. By construction, their limit $R(z)=\slim_{d\to\infty} R(H_d;z)$ has dense range. However, as already pointed out before, this is equivalent to \ref{it:TKgroup} due to the second Trotter-Kato approximation theorem \cite[Thm.~4.9]{EN01}.
	\end{proof}

	In Thm.~\ref{thm:TKAT}, we started with a sequence of generators $H_d$ and defined $H$ as a certain limit.
	However, for the proof of Thm.~\ref{thm:main}, we also need to look at the reversed setting, i.e.\ start from $H$ and define the $H_d$ as its finite-dimensional approximations. This scenario is elaborated in the following lemma.
	
	\begin{lem}\label{thm:strconv}
		If $\V\subset\D(H)$ is a core of $H$, then
		it holds that $P_dHP_d=:H_d\to H$ strongly on $\V$. Furthermore, we have $T_d(t)\to T(t)$ strongly on all of $\H$.
		\begin{proof}
			Let $\ket{\psi}\in\V$.
			Then there exists an $\tilde{d}\in\NN$ such that $\ket{\psi}\in V_{\tilde{d}}$.
			Hence, for $d\ge \tilde{d}$ it holds that $\ket{\psi_d} = P_d\ket{\psi} = \ket{\psi}$ and, thus, $H_d\ket{\psi_d} = P_dH\ket{\psi}\to H\ket{\psi}$.
			By the \emph{first Trotter-Kato approximation theorem} \cite[Thm.~4.8]{EN01}, the assertion follows.
		\end{proof}
	\end{lem}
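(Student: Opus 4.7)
The plan is to prove the two assertions in sequence, with the second following from the first by invoking the first Trotter-Kato approximation theorem.

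For the strong convergence $H_d\to H$ on $\V$, I would fix an arbitrary $\ket\psi\in\V = \bigcup_d V_d$. Since the subspaces $V_d$ are nested and $\V$ is their (increasing) union, there exists a $\tilde d$ such that $\ket\psi\in V_d$ for every $d\ge\tilde d$, and hence $P_d\ket\psi=\ket\psi$ from that point onward. Consequently, for such $d$,
\begin{equation}
H_d\ket\psi \;=\; P_dHP_d\ket\psi \;=\; P_d H\ket\psi.
\end{equation}
Because $\V$ is dense in $\H$, the projections $P_d$ converge strongly to the identity on $\H$, and in particular $P_dH\ket\psi\to H\ket\psi$. This yields $H_d\ket\psi\to H\ket\psi$, as claimed.

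For the semigroup convergence, I would appeal to the first Trotter-Kato approximation theorem \cite[Thm.~4.8]{EN01}. Its hypotheses require strong convergence of the generators on a core of the limit operator, which is precisely what the previous step delivers together with the assumption that $\V$ is a core of $H$. To fit the setting of Thm.~\ref{thm:TKAT}, I would first extend each finite-dimensional $T_d(t)$ by the identity on $V_d^\perp$ (as in the discussion preceding that theorem), so that all $T_d(t)$ act on $\H$; this extension preserves strong convergence because $P_d^\perp\to 0$ strongly. The conclusion $T_d(t)\to T(t)$ strongly on all of $\H$ then follows directly.

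The only conceptual point requiring attention is the passage from convergence on the subspace $\V$ to convergence of resolvents (and thereby of semigroups) on all of $\H$: this is exactly what the core hypothesis guarantees, since any $\ket\varphi\in\D(H)$ can be graph-norm-approximated by vectors in $\V$. I do not foresee any genuine obstacle beyond this bookkeeping, since the uniform contractivity of the $T_d(t)$ lets one extend strong convergence from a dense set to all of $\H$ automatically.
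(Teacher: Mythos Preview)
Your proposal is correct and follows essentially the same argument as the paper: fix $\ket\psi\in\V$, use that $P_d\ket\psi=\ket\psi$ for $d$ large enough so that $H_d\ket\psi=P_dH\ket\psi\to H\ket\psi$, and then invoke the first Trotter-Kato approximation theorem. You have simply spelled out a few details (the strong convergence $P_d\to\id$, the extension by the identity on $V_d^\perp$, and the role of the core hypothesis) that the paper leaves implicit.
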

	
	Before we turn towards strong convergence of the Trotter product, we provide the proof of Prop.~\ref{thm:uniformTrotter}.
	It also introduces some of the notation and techniques that we will employ later on.
	For the proof, recall Eq.~\eqref{eq:uniform_error_bound_beta}.
	\begin{proof}[Proof of Prop.~\ref{thm:uniformTrotter}]
		Let $H^{(i)}:\D(H^{(i)})\to\H$, $i=1,2$, be Hamiltonians such that $\V=\bigcup_d V_d\subset \D(H^{(1)})\cap\D(H^{(2)})$.
		As before, $H\supind i_d = P_dH\supind iP_d$.
		Denote $X_d(t) = \rme^{-\rmi tH_d^{(1)}}\rme^{-\rmi tH_d^{(2)}}$ and $U_d(t) = \rme^{-\rmi t(H_d^{(1)}+H_d^{(2)})}$ and by $X(t)$ and $U(t)$ the analogous operators on the infinite dimensional space.
		Assume that there exists $c_n = c_n(t) > 0$ such that $\beta_d^{(n)}(t) < c_n$ with $c_n\to 0$ if $n\to\infty$.
		Let $\psi\in\V$, w.l.o.g. $\psi\in V_d$, and let $\tilde{d}>d$.
		Then $P_{\tilde{d}}\psi = \psi$, so
		\begin{align}
			\norm{(X(t/n)^n - U(t))\psi}
			&\le \norm{(X_{\tilde{d}}(t/n)^n - U_{\tilde{d}}(t))\psi} + \norm{(\1-P_{\tilde{d}})(X(t/n)^n - U(t))\psi}\nonumber\\
			&\le c_n\norm{\psi} + \norm{(\1-P_{\tilde{d}})(X(t/n)^n - U(t))\psi}.
		\end{align}
		Taking $\tilde{d}\to\infty$, we thus obtain $\norm{(X(t/n)^n - U(t))\psi}\le c_n\norm{\psi}$.
		This shows uniform convergence on $\V$.
		Approximating $\psi\in\H$ by elements in $\V$ and using the triangle inequality and unitarity of $X(t)$ and $U(t)$ then shows $\norm{X(t/n)^n - U(t)}_\infty < c_n\to 0$.
	\end{proof}
	
	Recall the definitions in \eqref{eq:strong_Trotter_error_appendix} and \eqref{eq:Trotter_error_infinite}.
	We are now ready to present and prove:
	
	\begin{thm}\label{thm:main}
		Let $H^{(i)}:\D(H^{(i)})\to\H$, $i=1,2$, be generators of two strongly continuous contraction semigroups $T^{(i)}(t)$, $i=1,2$. Assume that $\V=\bigcup_d V_d\subset \D(H^{(1)})\cap\D(H^{(2)})$ is a common core of $H^{(1)}$ and $H^{(2)}$.
		Denote by $T_d(t)$ the strongly continuous contraction semigroup on $V_d$ generated by $H\supind1_d + H\supind2_d$.
		Then
		\begin{enumerate}[(1)]
			\item\label{it:main_somet}
			If for some $\ket{\psi}\in\V$, $t>0$ one has $b^{(n)}(\ket{\psi} ; t)\to 0$ as $n\to\infty$, then both limits $\lim_{d\to\infty}T_d(t)\ket{\psi}$ and $\lim_{n\to\infty}\rb*{T\supind1(t/n)T\supind2(t/n)}^n\ket{\psi}$ exist in $\H$ and coincide.
			\item\label{it:main_post}
			If for all $t\geq0$ and all $\ket{\psi}\in\V$ one has $b^{(n)}(\ket{\psi};t) \to 0$, then there is a strongly continuous contraction semigroup $T(t)$ on $\H$ with $T(t)\psi = \lim_d T_d(t)\ket \psi$, $\ket\psi\in\V$.
			The generator $H$ of $T(t)$ is an extension of $H\supind 1|_\V + H \supind 2|_\V$.
			The Trotter product of $T\supind1$ and $T\supind2$ converges strongly to $T(t)$.
		\end{enumerate}
		\begin{proof}
			We use the notations $X_d(t) = T\supind 1_d(t)T\supind 2_d(t)$ and $X(t)= T\supind1(t)T\supind2(t)$.
			
			\ref{it:main_somet}:
			We have to prove two statements. First, we prove that for the series of finite-dimensional approximations, $T_d(t)\ket{\psi}$ converges to a well-defined vector $\ket{\phi}$ as the level of truncation goes to infinity, $d\to\infty$. Second, we show that for the Trotter product, $X(t/n)^n\ket{\psi}$ converges to exactly this vector $\ket{\phi}$ in the Trotter limit $n\to\infty$. Let us start with the first assertion assuming $\ket{\psi}\in\V$. By the triangle inequality, it holds that
			\begin{equation}
				\norm*{\rb*{T_d(t) - T_{\tilde{d}}(t)}\ket{\psi}} \le b_d\supind n(\ket{\psi} ; t) + \norm*{\rb*{X_d(t/n)^n - X_{\tilde{d}}(t/n)^n}\ket{\psi}} + b_{\tilde{d}}\supind n(\ket{\psi} ; t),
			\end{equation}
			Due to Lemma~\ref{thm:strconv}, we have $\lim_{\tilde{d}\to\infty}\lim_{d\to\infty}\norm*{\rb*{X_d(t/n)^n - X_{\tilde{d}}(t/n)^n}\ket{\psi}} = 0$, and therefore
			\begin{equation}
				\limsup_{d\to\infty}\limsup_{\tilde{d}\to\infty} \norm*{\rb*{T_d(t) - T_{\tilde{d}}(t)}\ket{\psi}} \le 2b^{(n)}(\ket{\psi} ; t)\xrightarrow{n\to\infty} 0.
			\end{equation}
			Hence, $T_d(t)\ket\psi$ forms a Cauchy sequence and thus converges to some $\ket\phi\in\H$.
			For the second assertion, let $\ket{\psi}\in\V$ and use the triangle inequality again in order to receive
			\begin{align}
				\norm*{X(t/n)^n\ket{\psi} - \ket{\phi}}
				\le & \norm*{\rb*{X(t/n)^n - X_d(t/n)^n}\ket{\psi}}\nonumber\\
				&+ b_d\supind n(\ket\psi; t)\nonumber\\
				&+ \norm*{T_d(t) \ket{\psi} - \ket\phi}.
			\end{align}
			By convergence of $T_d(t)\ket{\psi}$ and, therefore, $X_d(t)\ket{\psi}$ and by taking the $\limsup_{d\to\infty}$, we obtain
			\begin{equation}
				\norm*{X(t/n)^n\ket{\psi} - \ket{\phi}} \le b^{(n)}(\ket{\psi}; t)\xrightarrow{n\to\infty} 0,
			\end{equation}
			which proves the claim.
			
			\ref{it:main_post}:
			Again, two claims are to be proved. For the first claim, we have to show that $T(t)$ is a strongly continuous contraction semigroup on $\H$. For the second claim, we have to show that the generator $H$ of $T(t)$ is an extension of $H\supind 1|_\V + H \supind 2|_\V$, i.e.\ the sum of the individual generators of $T^{(1)}(t)$ and $T^{(2)}(t)$. For the first claim, notice that by (1), $T(t)$ is the strong limit of the semigroups $T_d(t)$. Hence, it is a semigroup itself.
			For strong continuity, let $\ket{\psi}\in\V$.
			Then for $d$ large enough we have that $\ket{\psi}\in V_d$ and it holds that \cite[Lemma~1.3]{EN01}
			\begin{align}
				\|T_d(t)\ket{\psi} - \ket{\psi}\| &= \norm*{\int_0^t T_d(s)(H\supind 1_d + H\supind 2_d)\ket{\psi}\d s} \nonumber\\
				&\le \int_0^t\|(H_d\supind 1 + H_d\supind 2)\ket{\psi}\|\d s\nonumber\\
				&= t\|P_d(H\supind 1 + H\supind 2)\ket{\psi}\|.
			\end{align}
			By continuity of $\norm{\placeholder}$ we have
			\begin{equation}
				\|T(t)\ket{\psi} - \ket{\psi}\| = \lim_{d\to\infty} \|T_d(t)\ket{\psi} - \ket{\psi}\|\le t\|(H\supind 1 + H\supind 2)\ket{\psi}\|\xrightarrow{t\to0} 0.
			\end{equation}
			By density of $\V$ in $\H$ and, again, uniform continuity, strong continuity of $T(t)$ follows on all of $\H$.
			Now, let $\ket{\psi}\in\V\subset\D(H\supind 1)\cap\D(H\supind 2)$ and $\ket{\psi_d} = P_d\ket{\psi}$.
			Then as $d\to\infty$, $\ket{\psi_d}\to \ket{\psi}$ and $H\supind i_d\ket{\psi_d} \to H\supind i\ket{\psi}$, since for $d$ large enough $\ket{\psi_d} = \ket{\psi}$.
			Hence, $(H_d\supind 1 + H_d\supind 2)\ket{\psi_d}\to (H\supind 1 + H\supind 2)\ket{\psi}$, so $\ket{\psi}\in\D(H)$ and $H\ket{\psi} = (H\supind 1 + H\supind 2)\ket{\psi}$ by Thm.~\ref{thm:TKAT}~\ref{it:TKgen}.
			Thus $\conj{H\supind 1_{|\V} + H\supind 2_{|\V}} \subset H$.
		\end{proof}
	\end{thm}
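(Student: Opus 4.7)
The plan is to leverage the Trotter--Kato machinery of Theorem \ref{thm:TKAT} together with Lemma \ref{thm:strconv}, combined with a three-way triangle inequality that shuttles between three objects: the infinite-dimensional Trotter approximants $X(t/n)^n\ket\psi$, their finite-dimensional counterparts $X_d(t/n)^n\ket\psi$, and the truncated exact evolutions $T_d(t)\ket\psi$. The key conceptual point is that the hypothesis $b^{(n)}(\ket\psi;t)\to 0$ only controls distances \emph{within} a fixed truncation dimension $d$, so the triangle inequality must be arranged so that the only ``bad'' terms are finite-dimensional Trotter errors, while the finite-to-infinite passage is carried out on pieces that Lemma \ref{thm:strconv} already handles for free.

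For part \ref{it:main_somet}, I would first establish that $(T_d(t)\ket\psi)_d$ is Cauchy in $\H$ by inserting the finite-dimensional Trotter approximant on both sides:
\begin{equation}
\|(T_d(t) - T_{\tilde d}(t))\ket\psi\| \leq b_d^{(n)}(\ket\psi;t) + \|(X_d(t/n)^n - X_{\tilde d}(t/n)^n)\ket\psi\| + b_{\tilde d}^{(n)}(\ket\psi;t).
\end{equation}
Since $\V$ is a common core of each $H^{(i)}$, Lemma \ref{thm:strconv} gives $T_d^{(i)}(s)\to T^{(i)}(s)$ strongly, and uniform contractivity then yields $X_d(t/n)^n\to X(t/n)^n$ strongly, so the middle term vanishes as $d,\tilde d\to\infty$ for any fixed $n$. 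Taking $\limsup_{d,\tilde d}$ bounds the left side by $2b^{(n)}(\ket\psi;t)$, which tends to $0$ by hypothesis, so the sequence is Cauchy with some limit $\ket\phi$. A symmetric triangle inequality
\begin{equation}
\|X(t/n)^n\ket\psi - \ket\phi\| \leq \|(X(t/n)^n - X_d(t/n)^n)\ket\psi\| + b_d^{(n)}(\ket\psi;t) + \|T_d(t)\ket\psi - \ket\phi\|,
\end{equation}
in which one first sends $d\to\infty$ (first and third terms vanish) and then $n\to\infty$, shows the infinite-dimensional Trotter product converges to the same $\ket\phi$.

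For part \ref{it:main_post}, I would verify in turn that $T(t)\ket\psi:=\lim_d T_d(t)\ket\psi$ extends to a strongly continuous contraction semigroup on $\H$, and then invoke Theorem \ref{thm:TKAT} to identify its generator. The contraction property and the semigroup law $T(t+s)=T(t)T(s)$ pass to the strong limit on $\V$ and extend to all of $\H$ by density together with the uniform bound $\|T_d(t)\|\le 1$. Strong continuity at $t=0$ is the delicate step: for $\ket\psi\in V_d\subset\V$, the fundamental theorem of calculus gives $\|T_d(t)\ket\psi - \ket\psi\| \leq t\,\|P_d(H^{(1)}+H^{(2)})\ket\psi\|$, which passes in the limit to $\|T(t)\ket\psi - \ket\psi\| \leq t\,\|(H^{(1)}+H^{(2)})\ket\psi\|\to 0$ and extends to all of $\H$ by density. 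To show that the generator $H$ extends $H^{(1)}|_\V + H^{(2)}|_\V$, I would apply Theorem \ref{thm:TKAT} with the trivial approximating sequence $\ket{\psi_d}=P_d\ket\psi$: for $\ket\psi\in\V$ one has $\ket{\psi_d}=\ket\psi$ for all sufficiently large $d$, hence $(H_d^{(1)}+H_d^{(2)})\ket{\psi_d}\to (H^{(1)}+H^{(2)})\ket\psi$, giving $\ket\psi\in\D(H)$ with $H\ket\psi = (H^{(1)}+H^{(2)})\ket\psi$. Strong Trotter convergence to $T(t)$ then follows immediately from part \ref{it:main_somet}.

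The principal obstacle is arranging the triangle inequality so that the hypothesis actually bites: $b^{(n)}(\ket\psi;t)$ is defined as an outer $\limsup_{d\to\infty}$, so it cannot a priori be used to compare the full Trotter product to the full dynamics directly. The three-way inequality above is the workaround, and the fact that Lemma \ref{thm:strconv} delivers strong convergence of both the semigroups $T_d^{(i)}$ and (hence, via uniform contractivity) the composed products $X_d(t/n)^n$ on $\V$ is what lets the middle term drop out in the limit. A secondary subtlety is that strong operator limits of strongly continuous semigroups need not themselves be strongly continuous in infinite dimensions without an equicontinuity input; here the estimate $\|T_d(t)\ket\psi-\ket\psi\|\leq t\|(H^{(1)}+H^{(2)})\ket\psi\|$ on the core $\V$, combined with density and uniform contractivity, supplies exactly the needed control.
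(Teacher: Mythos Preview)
Your proposal is correct and follows essentially the same route as the paper: the same three-way triangle inequalities for part \ref{it:main_somet}, the same use of Lemma~\ref{thm:strconv} to kill the middle term, and the same appeal to Theorem~\ref{thm:TKAT} together with the estimate $\|T_d(t)\ket\psi-\ket\psi\|\le t\|P_d(H^{(1)}+H^{(2)})\ket\psi\|$ for part \ref{it:main_post}. Your identification of the two subtleties (the $\limsup$ definition of $b^{(n)}$ forcing the detour through $X_d$, and the need for a uniform-in-$d$ continuity estimate at $t=0$) matches the paper's reasoning exactly.
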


	Now, our Main~Result~\ref{thm:mainthm} follows as a special case of Thm.~\ref{thm:main}:

	\begin{main_two}
		Let $H^{(1)}:\D(A^{(1)})\to\H$ and $H^{(2)}:\D(H^{(2)})\to\H$ be self-adjoint, hence generators of two strongly continuous unitary groups $T^{(i)}(t)$, $i=1,2$. Assume that $\V=\bigcup_d V_d\subset \D(H^{(1)})\cap\D(H^{(2)})$ is a common core of $H^{(1)}$ and $H^{(2)}$. If for all $t\in \RR$, $b^{(n)}(\ket{\psi};t) \to 0$ as $n\to\infty$, then
		\begin{equation}
			\rb*{T\supind1(t/n)T\supind2(t/n)}^n\ket{\psi} \xrightarrow{n\to\infty} T(t)\ket{\psi},
		\end{equation}
		where $T(t)$ is a strongly continuous unitary group, whose generator $H$ is a self-adjoint and agrees with $(H_1 + H_2)$ wherever both are defined.
		\begin{proof}
			Since $H\supind 1$ and $H\supind 2$ are self-adjoint,
			so are $H\supind i_d, i=1,2$ and $H_d$.
			All these operators generate unitary groups that we will denote by $U\supind i$, $U\supind i_d$ and $U_d$, respectively.
			Since $b^{(n)}(\ket{\psi}; t)\to 0$ as $n\to\infty$ for $t\ge 0$, we obtain a contraction semigroup $U(t)$, $t\ge 0$, generated by $H$.
			Now, $U(t)$ is isometric:
			$\|U(t)\ket{\psi}\| = \lim_{d\to\infty}\norm*{U_d(t)\ket{\psi}} = \|\ket{\psi}\|.$
			Hence, $H$ is symmetric as the generator of a strongly continuous isometry semigroup. To see this, let $\ket{\psi},\ket{\phi} \in \D(H)$ and differentiate $\langle\psi|\phi\rangle = \langle U(t)\psi | U(t)\phi\rangle$ at $t=0$.
			The same argument can be applied for $t\le 0$, in which case we obtain a semigroup generated by $-H^*$. Hence, $H^*$ is symmetric as well.
			Notice that an operator $A$ is symmetric if and only if (i) $A \subset A^*$, and (ii) $A \subset B$ implies $B^* \subset A^*$ \cite{ReedSimon1981}. Applying this to $H$ and $H^*$ gives $H \subset H^* \subset H^{**} = H$, which shows that $H$ is self-adjoint.
			In order to see that $H$ is equal to $(H\supind1+H\supind2)$ wherever both are defined, we consider the difference: Let $Y$ be the closure of the symmetric operator $(H - H\supind1+H\supind2)$, defined on the intersection of the domains. We know from Thm.~\ref{thm:main}~\ref{it:main_post} that $Y |_\V = 0$.
			But a closed operator which vanishes on a dense subset must vanish everywhere. To see this, note that since $Y$ is closed, we have that $Y$ is an extension of $\overline{(Y|_\V)}$.
			But $Y|_\V=0 |_\V$ and since the zero operator is bounded this means that $\overline{(Y|_\V)} =0$. As an extension of the zero operator, $Y$ is itself zero, i.e., $H = H\supind1+H\supind2$ wherever all are defined.
		\end{proof}
	\end{main_two}

	It remains to prove Eq.~\eqref{eq:error_bound_infinite} from the main text.
	\begin{proof}[Proof of Eq.~\eqref{eq:error_bound_infinite}]
		By the triangle inequality, $b^{(n)}(\ket{\psi};t)^2
		\leq \limsup_{d\to\infty}\sum_{j=1}^d \big|\langle j|\psi\rangle\big|^2 b_d^{(n)}(\ket{j};t)^2$.
		If $\ket{\psi} \in \V$, then there exists an $\tilde{d}\in\NN$, such that $\ket{\psi}\in V_{\tilde{d}}$ and therefore $\ket{\psi}=\sum_{j=1}^{\tilde{d}} \langle j|\psi\rangle\ket{j}$. As a consequence for $d\geq \tilde{d}$,
		$b_d^{(n)}(\ket{\psi};t)^2\leq\sum_{j=1}^{\tilde{d}} \big|\langle j|\psi\rangle\big|^2 b_d^{(n)}(\ket{j};t)^2$,
		and therefore,
		\begin{align}
			b^{(n)}(\ket{\psi};t)
			\leq \rb[\bigg]{\sum_{j=1}^{\tilde{d}} \big|\langle j | \psi \rangle\big|^2 b^{(n)}(\ket j;t)^2 }^{1/2},
		\end{align}
		which proves the statement.
	\end{proof}

	\hypertarget{app:example}{}
	\section{D. Example: The quantum harmonic oscillator}
	The bounds from our Main~Result~\ref{thm:finite_bounds} hold for arbitrary Hamiltonians acting on a finite-dimensional Hilbert space. In particular, we can use them for bounding strong Trotter errors for the finite-dimensional truncated operators, in which we are interested. For instance, consider the case where $H^{(1)}=\frac{1}{2}Q^2$ and $H^{(2)}=\frac{1}{2}P^2$, where where $Q:\mathcal{{D}}(Q)\rightarrow L^{2}(\mathbb{{R}})$
	is the position operator and $P:\mathcal{{D}}(P)\rightarrow L^{2}(\mathbb{{R}})$
	is the momentum operator. We can rewrite $Q$ and $P$ in terms of
	the creation operator $a^{\dagger}$ and annihilation operator $a$
	as follows
	\begin{align}
		Q & =\frac{1}{\sqrt{2}}\left(a+a^{\dagger}\right),\\
		P & =\frac{\mathrm{{i}}}{\sqrt{2}}\left(a^{\dagger}-a\right).
	\end{align}
	The operators $a$ and $a^{\dagger}$ satisfy the commutation relation
	$\left[a,a^{\dagger}\right]=\1$. In this particular Trotter scenario, we would like to implement the target Hamiltonian $H^\mathrm{osc}=\frac{1}{2}\left(Q^2+P^2\right)$ by switching between the time evolutions generated by $H^{(1)}=\frac{1}{2}Q^2$ and $H^{(2)}=\frac{1}{2}P^2$, respectively. $H^\mathrm{osc}$ is the quantum harmonic oscillator and is related to the \emph{number operator} $N=a^{\dagger}a$
	by $H^\mathrm{osc}=N+\frac{1}{2}$. The eigenstates $\left|m\right\rangle $ of $N$ according
	to the eigenvalue equation $N\left|m\right\rangle =m\left|m\right\rangle $
	are the Fock states.
	In turn, the $\ket{m}$ are eigenstates of $H^\mathrm{osc}$ as well and satisfy the
	eigenvalue equation $H^\mathrm{osc}\left|m\right\rangle =\left(m+\frac{1}{2}\right)\left|m\right\rangle $.
	Let us use our Main~Result~\ref{thm:finite_bounds} to compute a strong error bound for this Trotter problem for the corresponding finite-dimensional approximations.
	Obviously, we have
	\begin{align}
		Q^{2} & =\frac{1}{2}\left(2N+aa+a^{\dagger}a^{\dagger}+1\right),\\
		P^{2} & =\frac{1}{2}\left(2N-aa-a^{\dagger}a^{\dagger}+1\right)
	\end{align}
	and the Fock states are a common core of $H^{(1)}=\frac{1}{2}Q^2$ and $H^{(2)}=\frac{1}{2}P^2$.
	Define $R^{\pm}=\frac{1}{2}\left(2N\pm aa\pm a^{\dagger}a^{\dagger}+1\right)$,
	so that $R^{+}=Q^{2}$ and $R^{-}=P^{2}$. Then the action of $R^\pm$ onto a Fock state $\ket{m}$ with $m\geq2$ computes to
	\begin{equation}
		R^{\pm}\left|m\right\rangle =\left(m+\frac{1}{2}\right)\left|m\right\rangle \pm\frac{1}{2}\sqrt{m\left(m-1\right)}\left|m-2\right\rangle \pm\frac{1}{2}\sqrt{\left(m+1\right)\left(m+2\right)}\left|m+2\right\rangle .\label{eq:Rm}
	\end{equation}
	Analogously, for $\left(R^\pm\right)^2$ we obtain
	\begin{align}
		\left(R^{\pm}\right)^{2} =&\frac{1}{4}\left(2N\pm aa\pm a^{\dagger}a^{\dagger}+1\right)^{2}\nonumber\\
		=&\frac{1}{4}\left(4N^{2}\pm2Naa\pm2Na^{\dagger}a^{\dagger}+4N\pm2aaN+aaaa+aaa^{\dagger}a^{\dagger}\pm2aa\right.\nonumber\\&\left.\pm2a^{\dagger}a^{\dagger}N+a^{\dagger}a^{\dagger}aa+a^{\dagger}a^{\dagger}a^{\dagger}a^{\dagger}\pm2a^{\dagger}a^{\dagger}+1\right)\nonumber\\
		=&\frac{1}{4}\left(4N^{2}\pm4aaN\pm4a^{\dagger}a^{\dagger}N+2aaa^{\dagger}a^{\dagger}+aaaa+a^{\dagger}a^{\dagger}a^{\dagger}a^{\dagger}\mp2aa\pm6a^{\dagger}a^{\dagger}-1\right),
	\end{align}
	where we have used $\left[N,aa\right]=-2aa$, $\left[N,a^{\dagger}a^{\dagger}\right]=2a^{\dagger}a^{\dagger}$
	and $\left[aa,a^{\dagger}a^{\dagger}\right]=4N+2$. Hence, by acting on a Fock state $\ket{m}$ with $m\geq4$,
	\begin{align}
		\left(R^{\pm}\right)^{2}\left|m\right\rangle  =&\left(N^{2}\pm aaN\pm a^{\dagger}a^{\dagger}N+\frac{1}{2}aaa^{\dagger}a^{\dagger}+\frac{1}{4}aaaa+\frac{1}{4}a^{\dagger}a^{\dagger}a^{\dagger}a^{\dagger}\mp\frac{1}{2}aa\pm\frac{3}{2}a^{\dagger}a^{\dagger}-\frac{1}{4}\right)\left|m\right\rangle \nonumber\\
		=&m^{2}\left|m\right\rangle \pm m\sqrt{m\left(m-1\right)}\left|m-2\right\rangle \pm m\sqrt{\left(m+1\right)\left(m+2\right)}\left|m+2\right\rangle\nonumber\\& +\frac{1}{2}\left(m+1\right)\left(m+2\right)\left|m\right\rangle +\frac{1}{4}\sqrt{m\left(m-1\right)\left(m-2\right)\left(m-3\right)}\left|m-4\right\rangle\nonumber\\& +\frac{1}{4}\sqrt{\left(m+1\right)\left(m+2\right)\left(m+3\right)\left(m+4\right)}\left|m+4\right\rangle \mp\frac{1}{2}\sqrt{m\left(m-1\right)}\left|m-2\right\rangle\nonumber\\& \pm\frac{3}{2}\sqrt{\left(m+1\right)\left(m+2\right)}\left|m+2\right\rangle -\frac{1}{4}\left|m\right\rangle \nonumber\\
		=&\frac{3}{4}\left(2m\left(m+1\right)+1\right)\left|m\right\rangle \pm\frac{1}{2}\left(2m-1\right)\sqrt{m\left(m-1\right)}\left|m-2\right\rangle\nonumber\\& \pm\frac{1}{2}\left(2m+3\right)\sqrt{\left(m+1\right)\left(m+2\right)}\left|m+2\right\rangle +\frac{1}{4}\sqrt{m\left(m-1\right)\left(m-2\right)\left(m-3\right)}\left|m-4\right\rangle\nonumber\\& +\frac{1}{4}\sqrt{\left(m+1\right)\left(m+2\right)\left(m+3\right)\left(m+4\right)}\left|m+4\right\rangle.
	\end{align}
	In order to apply our Main~Result~\ref{thm:finite_bounds}, we have to compute the norm of
	\begin{align}
		\left(\left(R^{\pm}\right)^{2}-2\left(m+\frac{1}{2}\right)R^{\pm}+\left(m+\frac{1}{2}\right)^{2}\right)\left|m\right\rangle  =&\frac{3}{4}\left(2m\left(m+1\right)+1\right)\left|m\right\rangle \pm\frac{1}{2}\left(2m-1\right)\sqrt{m\left(m-1\right)}\left|m-2\right\rangle\nonumber\\& \pm\frac{1}{2}\left(2m+3\right)\sqrt{\left(m+1\right)\left(m+2\right)}\left|m+2\right\rangle\nonumber\\& +\frac{1}{4}\sqrt{m\left(m-1\right)\left(m-2\right)\left(m-3\right)}\left|m-4\right\rangle\nonumber\\& +\frac{1}{4}\sqrt{\left(m+1\right)\left(m+2\right)\left(m+3\right)\left(m+4\right)}\left|m+4\right\rangle \nonumber\\&
		-\left(m+\frac{1}{2}\right)^{2}\left|m\right\rangle \mp\left(m+\frac{1}{2}\right)\sqrt{m\left(m-1\right)}\left|m-2\right\rangle\nonumber\\& \mp\left(m+\frac{1}{2}\right)\sqrt{\left(m+1\right)\left(m+2\right)}\left|m+2\right\rangle \nonumber\\
		=&\frac{1}{2}\left(m^{2}+m+1\right)\left|m\right\rangle \mp\sqrt{m\left(m-1\right)}\left|m-2\right\rangle\nonumber\\& \pm\sqrt{\left(m+1\right)\left(m+2\right)}\left|m+2\right\rangle +\frac{1}{4}\sqrt{m\left(m-1\right)\left(m-2\right)\left(m-3\right)}\left|m-4\right\rangle\nonumber\\& +\frac{1}{4}\sqrt{\left(m+1\right)\left(m+2\right)\left(m+3\right)\left(m+4\right)}\left|m+4\right\rangle .\label{eq:R2Fock}
	\end{align}
	Since the $\left\{ \left|m\right\rangle \right\} $ form an orthonormal
	basis,
	\begin{align}
		\left\Vert\left(\left(R^{\pm}\right)^{2}-2\left(m+\frac{1}{2}\right)R^{\pm}+\left(m+\frac{1}{2}\right)^{2}\right)\left|m\right\rangle\right\Vert^2=&\frac{3}{8} \left(m (m+1) \left(m^2+m+14\right)+10\right).\label{eq:error_term_HO}
	\end{align}
	We now truncate the operators at dimension $d$ in the Fock basis. That is, we use the orthogonal
	projector $P_{d}=\sum_{m=0}^{d-1}\left|m\right\rangle \left\langle m\right|$
	as a truncation scheme. Denote the truncated vectors and operators with a subscript $d$ and assume that $d\geq m+4$. Then 
	\begin{equation}
		\left\Vert \left(U_{d}(t)-W_{d}^{(n)}(t)\right)\left|m_{d}\right\rangle \right\Vert \leq\frac{t^{2}}{4n}\sqrt{\frac{3}{8} \left(m (m+1) \left(m^2+m+14\right)+10\right)},\label{eq:Trotter_error_HO}
	\end{equation}
	which is Eq.~\eqref{Trotter_X2P2} of the main text. The assumption $d\geq m+4$ is made in order to ensure that  $(R_{d}^{\pm})^2$ (and $R_{d}^{\pm}$)
	acts correctly on the truncated Fock states $\left|m_{d}\right\rangle $:
	Since $(R^{\pm})^2$ involves terms up to fourth order in $a^{\dagger}$,
	we have to make sure that all vectors in the linear combination of
	$(R^{\pm})^2\left|m\right\rangle $ in Eq.~\eqref{eq:R2Fock} are actually again vectors in the truncated Hilbert space $V_{d}$.
	If we drop this assumption and consider $m\leq d<m+4$ instead, all
	terms in $(R^{\pm})^2\left|m\right\rangle$, which give Fock states $\left|m+j\right\rangle $
	with $j\geq d-m+1$, will not contribute to $(R_{d}^{\pm})^2\left|m_{d}\right\rangle $.
	In this case, Eq.~\eqref{eq:error_term_HO}
	will become smaller and Eq.~\eqref{eq:Trotter_error_HO} will still give an upper bound.
	Similarly, for the validity of Eq.~\eqref{eq:Rm}--\eqref{eq:error_term_HO}, we made the assumption $m\geq4$. However, our bound in Eq.~\eqref{eq:Trotter_error_HO} stays valid even for Fock states $\ket{m}$ with $m<4$. In this case, all terms for which $m-4<0$ vanish in the derivation of the bound. Nevertheless, adding these additional terms only increases the norm so that Eq.~\eqref{eq:Trotter_error_HO} is still an upper bound.
	This bound is independent of the truncation dimension $d$, which
	shows that the full infinite-dimensional Trotter problem $H^{(1)}=\frac{1}{2}Q^{2}$,
	$H^{(2)}=\frac{1}{2}P^{2}$ converges strongly to $H^\mathrm{osc}=\frac{1}{2}\left(X^{2}+P^{2}\right)$.
	Furthermore, since $Q^{2}$ and $P^{2}$ are self-adjoint operators
	when acting on the Fock space $\mathcal{F}=\overline{\mathrm{span}\left\{ \left|m\right\rangle \right\}} $,
	$H^\mathrm{osc}=\frac{1}{2}\left(Q^{2}+P^{2}\right)$ is a self-adjoint operator
	on the Fock space.

\end{document}